\documentclass{article}
\usepackage{kr}
\usepackage{times}
\usepackage{soul}
\usepackage{url}
\usepackage[hidelinks]{hyperref}
\usepackage[utf8]{inputenc}
\usepackage[small]{caption}
\usepackage{graphicx}
\usepackage{amsmath}
\usepackage{amsthm}
\usepackage{booktabs}
\usepackage[linesnumbered,ruled,vlined]{algorithm2e}
\usepackage{algpseudocode}
\usepackage{todonotes}
\usepackage{cleveref}
\usepackage{comment}
\usepackage{thmtools}
\usepackage{thm-restate}
\newtheorem{definition}{Definition}

\newtheorem{theorem}{Theorem}

\newtheorem{lemma}{Lemma}
\newtheorem{corollary}{Corollary}
\newtheorem{remark}{Remark}

\usepackage{amssymb}
\usepackage{xspace}
\usepackage{paralist}
\usepackage{mathtools}

\usepackage{todonotes}
\usepackage{subcaption}

\usepackage{tikz}
 \usetikzlibrary{trees}
 \usetikzlibrary{shapes}
 \usetikzlibrary{fit}
 \usetikzlibrary{shadows}
 \usetikzlibrary{backgrounds}
 \usetikzlibrary{arrows,automata}
 \usetikzlibrary{positioning,fit,arrows.meta,backgrounds}
  \usetikzlibrary{calc, patterns, arrows, cd, tikzmark, decorations.pathreplacing, calligraphy, matrix, decorations.pathmorphing, shapes.multipart, plotmarks}

\tikzset{
    module/.style={%
        draw, rounded corners,
        minimum width=#1,
        minimum height=7mm,
        font=\sffamily
        },
    module/.default=2cm,
    >=LaTeX
}

\tikzset{
   n/.style= {circle,fill,inner sep=1.5pt,node distance=2cm}
  ,acc/.style={circle,draw,inner sep=3pt,node distance=2cm}
  ,phantom/.style={circle},
  ,arr/.style={->, >=stealth, semithick, shorten <= 3pt, shorten >= 3pt}
}

\usepackage{pgfplots}

\newcommand{\LTLf}{LTL$_f$\xspace}
\newcommand{\PPLTL}{PPLTL\xspace}
\newcommand{\LTL}{LTL\xspace}

\newcommand{\LTLfp}{LTL$_f$+\xspace}
\newcommand{\PPLTLp}{PPLTL+\xspace}

\newcommand{\DFA}{{DFA}\xspace}
\newcommand{\NFA}{{NFA}\xspace}

\newcommand{\tool}{$\mathsf{LydiaSyft+}$\xspace}

\newcommand{\rank}{\mathsf{rank}}

\title{Symbolic Synthesis for \LTLfp Obligations}

\author{%
Giuseppe {De Giacomo}$^{1}$\and
Christian Hagemeier$^1$\and
Daniel Hausmann$^2$\and
Nir Piterman$^{3}$
\affiliations
$^1$University of Oxford, UK\\
$^2$University of Liverpool, UK\\
$^3$University of Gothenburg and Chalmers University of Technology, Sweden\\
\emails
hausmann@liverpool.ac.uk,
\{christian.hagemeier,giuseppe.degiacomo\}@cs.ox.ac.uk,
piterman@chalmers.se
}

\begin{document}
\maketitle

\begin{abstract}
    We study synthesis for obligation properties expressed in \LTLfp, the extension of \LTLf to infinite traces. 
    Obligation properties are positive Boolean combinations of safety and guarantee (co-safety) properties and form the second level of the temporal hierarchy of Manna and Pnueli.
    Although obligation properties are expressed over infinite traces, they retain most of the simplicity of \LTLf. 
    In particular, we show that they admit a translation into symbolically represented deterministic weak automata (DWA) obtained directly from the symbolic deterministic finite automata (DFA) for the underlying \LTLf properties on trace prefixes. 
    DWA inherit many of the attractive algorithmic features of DFA, including Boolean closure and polynomial-time minimization. 
    Moreover, we show that synthesis for \LTLfp obligation properties is theoretically highly efficient~--~solvable in linear time once the DWA is constructed.
    We investigate several symbolic algorithms for solving DWA games that arise in the synthesis of obligation properties and evaluate their effectiveness experimentally.
    Overall, the results indicate that synthesis for \LTLfp obligation properties can be performed with virtually the same effectiveness as \LTLf synthesis.
    \end{abstract}

\section{Introduction}\label{sec:intro}
Linear Temporal Logic (LTL)~\cite{Pnueli77} and its finite-trace variants are commonly used in computer science and artificial intelligence, e.g., in planning for temporally extended goals and declarative control knowledge~\cite{DBLP:journals/amai/BacchusK98,DeGiacomoV99,BacchusK00,CalvaneseGV02,BaierM06,BaierFM07,GereviniHLSD09}.

Reactive synthesis concerns the automatic construction of programs (typically called \emph{strategies}) from temporal specifications for systems
(e.g. agents, processes, protocols, controllers, or robots) that interact with their environments during execution \cite{PnueliR89,finkbeiner2016synthesis,EhlersLTV17}. 
It is closely related to strong planning for temporally extended goals in fully observable nondeterministic domains \cite{Cimatti03,DBLP:journals/amai/BacchusK98,BacchusK00,CalvaneseGV02,BaierFM07,GereviniHLSD09,DR-IJCAI18,CamachoBM19}.
Significant advances in reactive synthesis have been achieved using the GR(1) fragment of \LTL \cite{PiPS06}, and by restricting attention to finite traces using \LTLf \cite{DegVa15}. These successes were largely based on the use of symbolic
techniques for handling temporal formulas, which substantially enhances scalability and computational capacity.

Recently, the symbolic techniques underlying \LTLf have been extended to infinite traces through the logic \LTLfp \cite{LTLfplus}.
The key promise of \LTLfp is that finite automata (as used in \LTLf) and the symbolic representations that efficiently support
them can also be leveraged in the setting of infinite traces.
\LTLfp builds on the so-called \emph{Manna-Pnueli} (or \emph{safety-progress}) \emph{hierarchy} of temporal formulas.
This hierarchy was originally introduced in \cite{LichtensteinPZ85} and subsequently developed in detail by Manna and Pnueli in \cite{DBLP:conf/podc/MannaP89} and \cite{MannaPnueli92,MannaPnueli95,MannaPnueli10}; see also the survey~\cite{PitermanP18}. 
In this paper, we concentrate on the lowest levels of the hierarchy,
namely \emph{safety} and \emph{guarantee} (also known as \emph{co-safety}) properties, as well as their Boolean closure, referred to as \emph{obligation} properties, which are our main focus. 

Obligation properties are very common in 
practice.  Specifications used in
model checking or synthesis frequently belong to the lower levels of
the temporal hierarchy, namely safety, guarantee, and obligation.
For instance, among the 55 specification patterns
of Dwyer et al.~\cite{dwyer.98.fmsp}, 25 are obligation properties.
Similarly, Somenzi \& Bloem's compilation of 25 LTL formulas ``found
in the literature''~\cite{somenzi.00.cav} contains 13
obligation properties.  
Moreover, even in the \LTLf{} setting, assumptions about the environments are often forms of obligations \cite{AminofGSFRZ25}. 

For obligation properties, 
we can work with deterministic weak automata (DWA)~\cite{DBLP:journals/ipl/Loding01}. 
These automata have a substantially simpler structure than those required for arbitrary temporal formulas, particularly
formulas from the higher levels of the Manna-Pnueli hierarchy.

Their structural simplicity makes DWA very easy to use: they
are closed under Boolean operations using constructions analogous to those
of deterministic finite automata (DFA), and after a linear-time preprocessing
stage~\cite{DBLP:journals/ipl/Loding01}, they can be minimized using the same algorithm as DFA.

In this setting, the contributions of the present paper are as follows:
\begin{enumerate}
    \item A technique for constructing symbolically represented DWA from \LTLfp{} obligation properties.
    \item A reduction of synthesis for the obligation fragment of \LTLfp{} to the solution of weak games played on DWA.
    \item A novel symbolic algorithm for the solving weak games by alternating safety and reachability computations. Together with three existing solution algorithms (the standard algorithms for B\"uchi and co\mbox{-}B\"uchi games, and a linear-time algorithm based on SCC decomposition), we obtain four fully symbolic \textsc{2ExpTime} synthesis algorithms for \LTLfp{} obligation formulas.
    \item A proof-of-concept implementation and experimental evaluation of these algorithms,  based on \tool~\cite{10.24963/kr.2025/78}, demonstrating that synthesis for the \LTLfp{} obligation fragment can be solved very efficiently, achieving performance comparable to synthesis for \LTLf.
\end{enumerate}

\section{Preliminaries}\label{sec:prelim}

\paragraph{\LTL and \LTLf.}\label{par:logics}
\emph{Linear-time Temporal Logic} (\LTL) specifies temporal properties of infinite traces~\cite{Pnueli77}.
We use the standard syntax
\begin{align*}
\varphi,\psi::= p \mid \neg\varphi\mid \varphi\land\psi\mid \mathsf{X}\, \varphi \mid \varphi\,\mathsf{U}\,\psi \tag{$p\in AP$},
\end{align*}
where $AP$ is a countable set of atomic propositions.
Common abbreviations include $\varphi\lor\psi=\neg(\neg \varphi\land\neg\psi)$, 
$\mathsf{true}= p\lor \neg p$, $\mathsf{false}= p\land \neg p$, 
$\mathsf{F}\,\varphi=\mathsf{true}\,\mathsf{U}\,\varphi$ (``eventually'') and
$\mathsf{G}\varphi=\neg \mathsf{F}\neg\varphi$ (``always'').
Formulas are evaluated over infinite traces $\tau\in (2^{AP})^\omega$.
\LTL is expressively equivalent to first-order logic over infinite traces~\cite{Kamp1968-KAMTLA-4}.

\emph{\LTL on finite traces} (\LTLf)~\cite{DegVa13} uses the same syntax 
but is interpreted over finite traces $\tau\in (2^{AP})^*$.
In this setting, $\mathsf{X}\,\mathsf{false}$ holds exactly at the last position of a finite trace.
A common abbreviation is the operator $\mathsf{X}[!]\varphi=\neg \mathsf{X}\neg\varphi$ (``strong next''),
expressing the existence of a next position in the trace that satisfies $\varphi$.
Given an \LTLf formula $\Phi$, let 
$[\Phi]=\{\tau\in (2^{AP})^*\mid \tau,0\models\Phi\}$ denote
the set of finite traces that satisfy $\Phi$ at position~$0$.
\LTLf is expressively equivalent to first-order logic over finite traces and to star-free regular expressions, see~\cite{DegVa13}.

\paragraph{Automata on finite traces.}\label{par:automata}
A transition system $T=(\Sigma,Q,I,\delta)$ consists
of a finite alphabet $\Sigma$, a finite set $Q$ of states, a set $I\subseteq Q$
of initial states, and a transition relation $\delta\subseteq Q\times \Sigma\times Q$.
For $q\in Q$ and $a\in\Sigma$, define $\delta(q,a)=\{q'\in Q\mid (q,a,q')\in \delta\}$.
The system is \emph{deterministic} if $|I|=1$ and $|\delta(q,a)|=1$ for all
$q\in Q$ and $a\in\Sigma$, and
\emph{nondeterministic} otherwise; in the deterministic case, we write $\delta(q,a)=q'$ for
the unique successor and denote the initial state by $\iota$.
A \emph{finite automaton} $\mathcal{A}=(T,F)$ is a transition system together with a set $F\subseteq Q$ of accepting states.
If $T$ is deterministic, $\mathcal{A}$ is a deterministic finite automaton (\DFA); otherwise it is a nondeterministic finite automaton (\NFA).
A \emph{run} of $\mathcal{A}$ on a word $w\in \Sigma^*$ is a path starting from an initial state whose sequence of transition labels  is $w$;
it is accepting if it ends in a state from $F$.
The automaton accepts the language $L(\mathcal{A})$ consisting of all finite words for which
an accepting run exists.

For each \LTLf formula $\Phi$, one can construct 
an equivalent \NFA of size $2^{\mathcal{O}(|\Phi|)}$ and an equivalent \DFA of size $2^{2^{\mathcal{O}(|\Phi|)}}$ represented symbolically \cite{DegVa15}.

\smallskip
\paragraph{Infinite-duration games on finite graphs.}\label{par:games}
A \emph{game arena} is a finite directed graph $A=(V,E\subseteq V\times V)$ whose
vertex set is partitioned into system nodes $V_s$ and environment nodes $V_e$.  
For $v\in V$, define $E(v)=\{v'\mid (v,v')\in E\}$ and assume $E(v)\neq\emptyset$ for all $v$. 
A \emph{play} is a path in $A$. 
A \emph{memoryless strategy} for the system player is a function $\sigma:V_s \to V$ that assigns 
to every system node $v\in V_s$ 
a successor $\sigma(v)\in E(v)$. A play $v_1 v_2 \ldots$
is \emph{compatible} with a $\sigma$ if whenever $v_i\in V_s$, we have $v_{i+1}=\sigma(v_i)$.
An \emph{objective} is a set of infinite plays; a play is \emph{winning}
for the system player if belongs to the objective.
We consider \emph{B\"uchi} and \emph{co\mbox{-}B\"uchi} objectives specified
by sets $F\subseteq V$: B\"uchi objectives require visiting $F$ infinitely often,
whereas co\mbox{-}B\"uchi objective require visiting $V\setminus F$ finitely often.
Strategies and winning plays for the environment player are defined dually.
A strategy for a player wins a node $v$ 
if every compatible play starting at $v$ is winning for that player.
B\"uchi and co\mbox{-}B\"uchi games are positionally determined:
every node is won by exactly one player, and memoryless (positional) strategies suffice.
Solving a game amounts to computing winning regions together
with witnessing strategies.

\smallskip
\paragraph{Reactive synthesis and games.}\label{par:syntgames}
Assume that the set $AP$ of atomic propositions is
partitioned into system actions $X$ and environment actions $Y$.
A \emph{(synthesis) strategy} is a function $\sigma:(2^Y)^*\to 2^X$. An \emph{outcome}
of $\sigma$ is an infinite word $(x_1\cup y_1)(x_2\cup y_2)\ldots\in (2^{AP})^\omega$
such that $x_{i+1}=\sigma(y_1 y_2\ldots y_i)$ for all $i\geq 0$.
Thus, synthesis strategies encode transducers.
The \emph{synthesis problem} for a temporal formula $\Psi$ 
asks whether there exists a synthesis strategy $\sigma$ such that every 
outcome of $\sigma$ satisfies $\Psi$, and if so, to construct one.

A deterministic transition system $T=(2^{\mathsf{AP}},Q,\iota,\delta)$ induces
a game arena $A_T=(Q\cup Q\times 2^X\cup Q\times 2^X\times 2^Y,E)$ in which the system owns nodes in $Q$
and the environment owns all other nodes. Transitions are defined by $E(q)=\{q\}\times 2^X$,
$E(q,x)=\{(q,x)\}\times 2^Y$, and  $E(q,x,y)=\{\delta(q,x\cup y)\}$.
Plays of the form $q_1(q_1,x_1)(q_1,x_1,y_1)q_2(q_2,x_2)(q_2,x_2,y_2)\ldots$ induce runs
$q_1 q_2\ldots$ of $T$ on the word $(x_1\cup y_1)(x_2\cup y_2)\ldots$.

Hence, the synthesis problem reduces to translating the specification into a deterministic
transition system and solving the induced game with an appropriate objective.

\section{Specification of Obligation Properties}

In their seminal work, Manna and Pnueli introduced a hierarchy of $\omega$-regular languages~\cite{DBLP:conf/podc/MannaP89}.
We apply this hierarchy to LTL-definable properties, i.e., properties definable
in first-order logic over infinite sequences. 
The hierarchy is based on four fundamental classes of infinite-trace properties
obtained from finite-trace properties via trace quantification.
\emph{Safety} properties require that a finite-trace property holds
for all prefixes, whereas \emph{guarantee} properties require the existence of a prefix satisfying 
the finite-trace property.
\emph{Recurrence} properties require that the finite-trace property holds on infinitely many prefixes,
and \emph{persistence} properties require it to hold on all but finitely many prefixes.
\emph{Obligation} properties are Boolean combinations of safety and guarantee properties,
and \emph{reactivity} properties are Boolean combinations 
of recurrence and persistence properties.

In this work, we focus on obligation properties.
These properties naturally capture specifications that combine
invariants with eventual goals and therefore arise frequently in planning, 
control, and reactive synthesis.

The recently proposed logic \LTLfp~\cite{LTLfplus}, which is inspired by the Manna-Pnueli hierarchy, uses \LTLf properties to define the finite-trace properties mentioned in the hierarchy description.
Namely, it explicitly uses guarantee, safety, recurrence, and persistence properties over \LTLf properties, and considers Boolean combinations of them. 

Formulas of \LTLfp over a countable set $AP$ of propositions are generated by the grammar
\begin{align*}
\Psi, \Psi' ::= \forall\Phi \mid \exists\Phi \mid \forall\exists\Phi \mid \exists\forall\Phi \mid \Psi \lor \Psi' \mid \Psi \land \Psi' \mid \neg \Psi
\end{align*}
where $\Phi$ is an \LTLf formula over $AP$.
We refer to formulas of the form $\mathbb{Q}\Phi$ as \emph{finite-trace components}.
Recall that $[\Phi]\subseteq (2^{AP})^*$ denotes the set of finite traces satisfying $\Phi$. 
Given a set $R\subseteq (2^{AP})^*$ of finite traces, let $\exists R$ ($\forall R$) denote the set of infinite traces $\tau\in (2^{AP})^\omega$
such that at least one prefix of $\tau$ (respectively, every prefix) belongs to $T$.
Similarly, let $\forall\exists T$ ($\exists\forall T$) denote the set of infinite traces for which infinitely many 
prefixes (respectively, all but finitely many prefixes) belong to $T$.
We evaluate \LTLfp formulas $\Psi$ over infinite traces using the extension $[\Psi]\subseteq (2^{AP})^\omega$
defined inductively by
$[\Psi\lor\Psi'] = [\Psi]\cup [\Psi']$, 
$[\Psi\land\Psi'] = [\Psi]\cap [\Psi']$,
$[\neg\Psi] = (2^{AP})^\omega\setminus[\Psi]$, and $[\mathbb{Q}\Phi] = \mathbb{Q} [\Phi]$
where $\mathbb{Q}\in\{\exists,\forall,\forall\exists,\exists\forall\}$.

\begin{theorem}\cite{LTLfplus}
The logics \LTLfp and \LTL define the same infinite-trace properties, namely those definable in first-order logic over infinite sequences.
\end{theorem}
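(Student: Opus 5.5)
We prove the two inclusions separately, using two classical facts: Kamp's theorem (\LTL equals first-order logic over infinite sequences, as recalled in the preliminaries), and the Manna--Pnueli normal form, by which every \LTL-definable property is a Boolean combination of recurrence properties $\forall\exists R$ and persistence properties $\exists\forall R$ with $R$ a star-free (equivalently, \LTLf-definable) language of finite traces.

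\emph{From \LTLfp to \LTL.} We proceed by structural induction on $\Psi$. The Boolean cases are immediate, so it suffices to treat a finite-trace component $\mathbb{Q}\Phi$. We first translate $\Phi$ into an \LTL formula $\widehat\Phi$ containing a fresh-free past encoding. Concretely, we use the standard fact that \LTLf is expressively equivalent to pure-past \LTL evaluated at the last position. This follows from Kamp's theorem for finite words together with the separation theorem of Gabbay. So let $\Phi^{\leftarrow}$ be a pure-past formula with $\tau[0..i]\models\Phi$ iff $\tau,i\models\Phi^{\leftarrow}$. Then
\[
\exists[\Phi]=[\mathsf{F}\Phi^{\leftarrow}],\quad \forall[\Phi]=[\mathsf{G}\Phi^{\leftarrow}],\quad \forall\exists[\Phi]=[\mathsf{G}\mathsf{F}\Phi^{\leftarrow}],\quad \exists\forall[\Phi]=[\mathsf{F}\mathsf{G}\Phi^{\leftarrow}].
\]
We then eliminate the past operators using Gabbay's separation theorem: past \LTL over infinite traces, evaluated at position $0$, is as expressive as future \LTL. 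Alternatively, we can observe directly that each of these four sets is first-order definable, by quantifying over the prefix endpoint and relativizing the FO sentence for $\Phi$ to that prefix. Kamp's theorem then yields an \LTL formula.

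\emph{From \LTL to \LTLfp.} Given an \LTL formula $\psi$, Kamp's theorem gives an FO sentence, and hence a counter-free, aperiodic $\omega$-regular language. We invoke the Manna--Pnueli result that every such language is a positive Boolean combination of sets $\forall\exists R_i\cup\exists\forall R'_i$ (the reactivity normal form), where $R_i, R'_i$ are aperiodic, i.e.\ star-free, languages of finite words. Since \LTLf captures exactly the star-free languages, each $R_i$ equals $[\Phi_i]$ for some \LTLf formula $\Phi_i$. This yields an \LTLfp formula $\bigwedge_i(\forall\exists\Phi_i\lor\exists\forall\Phi'_i)$.

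\emph{Main obstacle.} The main difficulty is this second direction, specifically ensuring that the languages $R_i$ in the normal form can be chosen star-free rather than merely regular. Our first approach would be to take a counter-free deterministic Rabin (or Streett) automaton for $\psi$; such an automaton exists by the aperiodicity of first-order-definable $\omega$-languages. For each pair $(E,F)$, we define $R_F$ as the set of finite words whose run ends in $F$, and similarly $R_E$. Languages recognized by counter-free DFAs are aperiodic, hence star-free, hence \LTLf-definable, so each pair becomes $\forall\exists[\Phi_F]\land\exists\forall\neg[\Phi_E]$, and the acceptance condition is their disjunction. The delicate step is the existence of a counter-free deterministic automaton with such an acceptance condition for every FO-definable property. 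We would cite the Maler--Pnueli / Diekert--Gastin results on aperiodic deterministic $\omega$-automata for this step.
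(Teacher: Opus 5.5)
The paper does not prove this theorem; it imports it from \cite{LTLfplus}, so there is no in-paper argument to compare against. Your argument is correct and is the standard one. For the inclusion of \LTLfp in \LTL, each component becomes $\mathsf{F}\Phi^{\leftarrow}$, $\mathsf{G}\Phi^{\leftarrow}$, $\mathsf{G}\mathsf{F}\Phi^{\leftarrow}$ or $\mathsf{F}\mathsf{G}\Phi^{\leftarrow}$. More simply, each component is first-order definable by relativizing the sentence for $\Phi$ to an initial segment, after which Kamp's theorem applies. For the converse, the Manna--Pnueli normal form $\bigwedge_i(\mathsf{G}\mathsf{F}p_i\lor\mathsf{F}\mathsf{G}q_i)$ combines with the equal expressiveness of \PPLTL and \LTLf on finite traces. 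This is exactly the Manna--Pnueli view on which \LTLfp is built.

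What you call the main obstacle is already discharged by the citation you make. The Manna--Pnueli normal form theorem is stated with $p_i,q_i$ \emph{past} formulas. The languages $R_i,R'_i$ are therefore definable in \PPLTL, hence in \LTLf, with no further work. The counter-free Rabin detour is superfluous, and as written it rests on a claim that is not immediate. Aperiodicity of an $\omega$-language does not by itself give a counter-free \emph{deterministic} automaton, since the usual determinization constructions do not obviously preserve counter-freeness. You would therefore need a specific theorem there. The most convenient is Thomas's characterization of first-order definable $\omega$-languages as Boolean combinations of sets $\forall\exists U$ with $U$ star-free. It yields an \LTLfp formula directly, using only $\forall\exists$ components and negation.

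Two minor points. First, the equivalence of \LTLf and \PPLTL on finite traces needs only Kamp's theorem for finite words (or closure of first-order logic under word reversal), not separation. Second, the past-elimination result you invoke concerns \LTL with both past and future operators evaluated at position $0$, not pure-past \LTL. Your direct first-order argument makes that step unnecessary anyway.
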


The \emph{obligation fragment} of \LTLfp is obtained by allowing only safety and guarantee finite-trace components,
i.e., by using only the clauses $\exists\Phi$ and $\forall\Phi$ from the syntax.
We refer to formulas in this fragment as \emph{obligation formulas}.

As an example, suppose a domain and a collection of goals
are specified by finite-trace \LTLf formulas $\Phi_d$ and $\Phi_{g_1}$,\ldots,$\Phi_{g_k}$.
The obligation formula $\forall\Phi_d\to(\bigwedge_{i\leq k}\exists\Phi_{g_i})$
states that 
every infinite trace that remains within the domain eventually satisfies each goal. 
Goals can be made conditional using $\forall\Phi_d\to(\bigwedge_{i\leq k}(\exists\Phi_{t_i}\to\exists\Phi_{g_i}))$
where $\Phi_{t_i}$ specifies a triggering property for the $i$th goal. This formula states
that whenever a trigger occurs along a trace that stays within the domain,
the corresponding goal is eventually achieved (either before or after the trigger occurs).
Not every property is expressible as an obligation formula.
For example, the recurrence formula $\forall\exists (\mathsf{F}(a\land \mathsf{X}\,\mathsf{false}))$ states
that infinitely many prefixes end with $a$, equivalently, that $a$ holds infinitely often.

\begin{lemma}
The obligation fragment of \LTLfp defines exactly the obligation properties over infinite traces.
\end{lemma}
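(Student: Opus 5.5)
We prove the two inclusions separately. Throughout, ``obligation property'' means a positive Boolean combination of safety and guarantee properties in the sense of Manna and Pnueli, restricted to LTL-definable ones. Concretely, these are sets of the form $\bigcap_{i}(\mathbb{A}(R_i)\cup\mathbb{E}(S_i))$, where $\mathbb{A}(R)$ (resp.\ $\mathbb{E}(S)$) collects the infinite traces all of whose prefixes lie in (resp.\ some prefix lies in) a first-order definable set $R$ (resp.\ $S$) of finite traces. Equivalently, by the classical characterization, they are the LTL-definable $\omega$-languages that are simultaneously $G_\delta$ and $F_\sigma$, i.e.\ accepted by deterministic weak automata.

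\textbf{Soundness (obligation formulas define obligation properties).} We argue by structural induction on an obligation formula $\Psi$, first pushing negations inward. This uses the dualities $\neg\forall\Phi\equiv\exists\neg\Phi$ and $\neg\exists\Phi\equiv\forall\neg\Phi$, which hold because $[\neg\Phi]=(2^{AP})^*\setminus[\Phi]$ and hence $(2^{AP})^\omega\setminus\forall[\Phi]=\exists[\neg\Phi]$. The result is a negation-free formula that is a positive Boolean combination of components $\forall\Phi$ and $\exists\Phi$. By definition, $\forall[\Phi]$ is a safety property and $\exists[\Phi]$ is a guarantee property over the finite-trace property $[\Phi]$. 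Since $[\Phi]$ is \LTLf-definable, it is first-order definable on finite traces, so these are exactly the LTL-definable instances. Positive Boolean combinations of them are obligation properties by definition, and by the theorem above (\LTLfp and \LTL coincide) the result is also LTL-definable.

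\textbf{Completeness (every obligation property is definable by an obligation formula).} Take an LTL-definable obligation property $L$. By the Manna--Pnueli normal form we may write $L=\bigcap_i(\mathbb{A}(R_i)\cup\mathbb{E}(S_i))$. The remaining task is to choose each $R_i,S_i$ to be first-order (hence \LTLf) definable; we can then set $\Psi=\bigwedge_i(\forall\Phi_{R_i}\lor\exists\Phi_{S_i})$, using $[\Phi_R]=R$ from the expressive equivalence of \LTLf with first-order logic on finite traces.

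This choice is the main obstacle. The hierarchy is defined over arbitrary regular finite-trace properties, and the definition above only assumes $L$ itself is first-order definable. We must therefore show that the witnessing $R_i,S_i$ can be taken star-free. The plan is to use the canonical witnesses rather than arbitrary ones. First, take the minimal DWA (equivalently, a counter-free deterministic automaton) for $L$; it is counter-free because $L$ is first-order definable. Second, decompose $L$ by the standard weak-automaton argument into a conjunction of terms, one per rank or alternation level of its SCC structure. Each term has the form ``the run stays within a given set of states'' or ``the run reaches a given set of states'', and these are exactly safety and guarantee conditions whose finite-trace sets are recognized by the same counter-free transition structure with a chosen set of final states. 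Third, conclude that these finite-trace languages are star-free by Schützenberger/McNaughton--Papert, hence \LTLf-definable. An alternative route is to take $R=\mathrm{Pref}$-closure-based witnesses, such as the set of finite traces having some extension in the relevant language, and check that this set is first-order definable. Having the DWA-based route in mind also connects with the construction developed later in the paper.
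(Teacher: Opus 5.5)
The paper does not actually prove this lemma. It is stated as a consequence of the \LTLfp{} expressiveness result~\cite{LTLfplus} and of the Manna--Pnueli characterization~\cite{DBLP:conf/podc/MannaP89,ChangMP92}: an \LTL-definable property is an obligation property iff it is equivalent to a canonical formula $\bigwedge_i(\mathsf{G}\,p_i\lor\mathsf{F}\,q_i)$ with past formulas $p_i,q_i$. Past formulas define first-order, hence \LTLf-definable, sets of finite traces, so $\mathsf{G}\,p_i$ and $\mathsf{F}\,q_i$ are exactly components $\forall\Phi_i$ and $\exists\Phi'_i$. The star-freeness of the witnesses, which you rightly single out as the real obstacle, is precisely what the cited theorem supplies.

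Your soundness direction is fine. For completeness you instead reprove that normal-form theorem automata-theoretically, and the plan works. Ranks are nonincreasing along runs, and a run is accepting iff its eventual rank is even. So with $S_j=\{u\mid\rank(\delta(\iota,u))\le j\}$ and $S_{-1}=\emptyset$ you get $L=\bigcup_{j\text{ even}}\bigl(\mathbb{E}(S_j)\cap\mathbb{A}(\overline{S_{j-1}})\bigr)$. More generally, for any state set $P$, ``some prefix ends in $P$'' and ``every prefix ends in $P$'' mean ``eventually visits $P$'' and ``always stays in $P$''. Compared with the citation, this gives explicit witnesses read off the minimal DWA, with a number of components linear in the number of ranks, i.e.\ a converse of Corollary~\ref{cor:to_automata}. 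The citation buys brevity.

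The step carrying all the weight, however, is only asserted: that the minimal DWA is counter-free ``because $L$ is first-order definable''. Your parenthetical ``equivalently, a counter-free deterministic automaton'' is wrong. A non-minimal DWA for a star-free language can contain a counter: two accepting states swapped by every letter recognize $\Sigma^\omega$. So minimality is essential, and you need two facts.
\begin{enumerate}
\item Canonicity: the minimal DWA is the residual automaton $[u]\mapsto u^{-1}L$. This is due to Staiger, and it is also what makes L\"oding's minimization in Lemma~\ref{lem:DWAmincorrectness} canonical.
\item Aperiodicity: an Ehrenfeucht--Fra\"{\i}ss\'e argument gives an $N$, depending only on the quantifier depth of a defining sentence, such that $ux^Nw\in L\iff ux^{N+1}w\in L$ for all finite $u,x$ and infinite $w$. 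Hence $x^N$ and $x^{N+1}$ act identically on residuals.
\end{enumerate}
Then Sch\"utzenberger and McNaughton--Papert make every $\{u\mid\delta(\iota,u)\in P\}$ star-free, hence \LTLf-definable.

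Finally, fix one definition. Your opening paragraph already requires $R_i,S_i$ to be first-order definable. Under that definition completeness is immediate, and your ``equivalently'' claim is the actual content of the lemma. Your completeness part then silently switches to the other definition. The appeal to the Manna--Pnueli normal form with arbitrary witnesses is not used by your Step~2 and can be dropped; you only need $L$ to be first-order definable and DWA-recognizable.
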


\begin{remark}
Alongside \LTLfp,~\cite{LTLfplus} introduces the related and equally expressive logic \PPLTLp, in which finite-trace specifications are formulated in \emph{Pure Past LTL} (\PPLTL) rather than in \LTLf. We focus our technical developments on the obligation fragment of \LTLfp, but emphasize that the synthesis algorithms
we propose extend directly to the obligation fragment of \PPLTLp.
Since \PPLTL formulas can be translated into DFA of single-exponential size,
the upper bounds in the \PPLTLp counterparts of 
Corollary~\ref{cor:to_automata} and Theorem~\ref{thm:synth} are singly exponential.
\end{remark}

The obligation fragment of \emph{\LTL}, obtained by imposing syntactic restrictions
on classical \LTL formulas, has the same expressive power
and thus also characterizes the obligation properties over infinite traces~\cite{ChangMP92}.

\begin{definition}
The \emph{synthesis problem} for obligation \LTLfp asks, given 
an obligation formula $\Psi$, whether there exists a strategy $\sigma$ such that every 
outcome of $\sigma$ satisfies $\Psi$, and if so, to construct one.
\end{definition}

\begin{restatable}{lemma}{simp}
\label{lem:simp}
Let $\Phi$ and $\Phi'$ be \LTLf formulas. Then 
\begin{align*}
\forall \Phi \land \forall \Phi' & \equiv  \forall (\Phi\land \Phi') &
\exists \Phi \lor \exists \Phi' & \equiv  \exists (\Phi\lor \Phi') 
\end{align*}
Moreover, $\Phi$ is equi-realizable with the \LTLfp formula $\exists \Phi$.
\end{restatable}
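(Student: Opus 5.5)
The plan is to unfold the semantic definitions of $\exists R$ and $\forall R$ given above and reduce each equivalence to an elementary set identity on finite traces. For both equations I use that the \LTLf semantics at position~$0$ is compositional for Boolean connectives, so $[\Phi\land\Phi']=[\Phi]\cap[\Phi']$ and $[\Phi\lor\Phi']=[\Phi]\cup[\Phi']$.

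For the first equation, I fix an infinite trace $\tau\in(2^{AP})^\omega$ and write $\mathrm{Pref}(\tau)$ for its set of finite prefixes, using the same convention for prefixes (empty or not) as in the definition of $\exists R$ and $\forall R$. Then $\tau\in[\forall\Phi\land\forall\Phi']$ holds iff $\mathrm{Pref}(\tau)\subseteq[\Phi]$ and $\mathrm{Pref}(\tau)\subseteq[\Phi']$. This is equivalent to $\mathrm{Pref}(\tau)\subseteq[\Phi]\cap[\Phi']=[\Phi\land\Phi']$, i.e.\ $\tau\in[\forall(\Phi\land\Phi')]$. The second equation is the dual. $\tau\in[\exists\Phi\lor\exists\Phi']$ iff $\mathrm{Pref}(\tau)\cap[\Phi]\neq\emptyset$ or $\mathrm{Pref}(\tau)\cap[\Phi']\neq\emptyset$. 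This is equivalent to $\mathrm{Pref}(\tau)\cap([\Phi]\cup[\Phi'])\neq\emptyset$, i.e.\ $\tau\in[\exists(\Phi\lor\Phi')]$. The second equation can also be derived from the first by negation, using $\neg\exists\Phi\equiv\forall\neg\Phi$, which holds because $[\neg\Phi]$ is the complement of $[\Phi]$ in $(2^{AP})^*$.

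For the equi-realizability claim, I recall the standard notion of \LTLf synthesis~\cite{DegVa15}. A strategy $\sigma:(2^Y)^*\to 2^X$ realizes the \LTLf formula $\Phi$ if every infinite outcome of $\sigma$ has some finite prefix satisfying $\Phi$, since the agent may stop as soon as $\Phi$ is achieved. I also fix the same move order as in our definition of outcomes, with the system choosing $x_{i+1}=\sigma(y_1\ldots y_i)$. With these conventions the outcomes of a given $\sigma$ are literally the same set of infinite words in both problems. The condition ``every outcome has a prefix in $[\Phi]$'' is exactly ``every outcome lies in $\exists[\Phi]=[\exists\Phi]$''. Hence the same $\sigma$ witnesses realizability of $\Phi$ and of $\exists\Phi$, which gives equi-realizability in both directions. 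In fact the two problems have identical sets of winning strategies.

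The only step needing care is the last one. The set-theoretic identities are immediate. The equi-realizability argument, however, depends on aligning conventions: the \LTLf realizability notion in~\cite{DegVa15} must use the same player order and the same treatment of prefixes (in particular, of the empty trace) as our definition of $\exists R$. I would state these conventions explicitly. If they differ (for example, if the environment moves first in the \LTLf setting), I would note that the argument adapts verbatim once both problems are stated under one common convention, because the correspondence between strategies is the identity.
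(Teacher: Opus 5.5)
Your proof is correct. The two Boolean identities are handled exactly as in the paper: you unfold $[\forall\Phi\land\forall\Phi']=\forall[\Phi]\cap\forall[\Phi']$ and use that ``all prefixes lie in $[\Phi]$ and all lie in $[\Phi']$'' is the same as ``all prefixes lie in $[\Phi]\cap[\Phi']=[\Phi\land\Phi']$''. The $\exists$/$\lor$ case is dual. Your alternative derivation via $\neg\exists\Phi\equiv\forall\neg\Phi$ is an extra the paper does not use.

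For equi-realizability your route differs slightly from the paper's. The paper argues trace by trace. A finite trace in $[\Phi]$ makes every infinite extension satisfy $\exists\Phi$, and an infinite trace in $[\exists\Phi]$ has a prefix in $[\Phi]$. The step from this trace-level correspondence to strategies is left implicit. You argue directly at the level of strategies. The \LTLf winning condition (every infinite outcome has a finite prefix in $[\Phi]$) is literally the condition that every outcome lies in $\exists[\Phi]=[\exists\Phi]$. Hence, under a common move order, the sets of winning strategies coincide.

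Your version is more direct and gives more than the paper's sketch. It yields the stronger statement that the very same strategies realize both specifications. It also makes explicit the convention dependencies that the paper passes over silently: who moves first, and whether the empty prefix counts.
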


Since obligation formulas form a fragment of \LTLfp, their synthesis problem can 
be solved using techniques for the full logic, based on games over deterministic
Emerson-Lei automata (DELA), whose objectives are Boolean combinations of B\"uchi
and co\mbox{-}B\"uchi conditions~\cite{LTLfplus,10.24963/kr.2025/78}. These methods translate
individual finite-trace components into B\"uchi or co\mbox{-}B\"uchi automata and obtain a
DELA by Boolean combination of the individual automata.
The resulting game is solved via a nested fixpoint computation
derived from the Zielonka tree of the acceptance condition~\cite{DBLP:conf/fossacs/HausmannLP24}; 
solving DELA games is known to be \textsc{PSpace}-complete.

We show that obligation formulas admit a reduction to substantially simpler
$\omega$-automata, enabling efficient minimization
and linear-time game solving.

\section{From Obligations to DWA}\label{sec:obltodwa}

In this Section, we transform obligation formulas into deterministic \emph{weak} automata (DWA) on 
infinite words. In every run of a DWA, there exists a position from which on only accepting 
states or only rejecting states are visited. Owing to this structural property, DWA are 
simpler than B\"uchi and co\mbox{-}B\"uchi automata, and, in many respects, similar to DFA.
In particular, DWA are closed under conjunction and disjunction and can be minimized efficiently using a variant of
Hopcroft's DFA minimization algorithm. 

Each finite-trace component $\mathbb{Q}\Phi$ of an obligation formula corresponds to a 
DWA: the finite-trace formula $\Phi$ can be transformed to a DFA $\mathcal{A}_\Phi$ that accepts exactly
the finite traces that satisfy $\Phi$. The automaton for $\mathbb{Q}\Phi$ then processes 
infinite traces $\tau$ and checks whether $\mathcal{A}_\Phi$ accepts some prefix of $\tau$ (if $\mathbb{Q}=\exists$),
or all prefixes of $\tau$ (if $\mathbb{Q}=\forall$). This is achieved by turning 
accepting (respectively, rejecting) states in $\mathcal{A}_\Phi$ into absorbing sinks.
An infinite run of the resulting automaton $\mathcal{A}_{\mathbb{Q}\Phi}$ is accepting iff it visits
an accepting state infinitely often, which in turn is the case iff from some position onward only accepting states are visited.

As DWA are closed under conjunction and disjunction, we can construct DWA for full obligation formulas,
which are Boolean combinations of safety and guarantee finite-trace components.

\paragraph{Weak B\"uchi automata.}

We consider standard automata on infinite traces. 
A \emph{B\"uchi automaton} $\mathcal{A}=(T,F)$ is a transition
system $T=(\Sigma,Q,\iota,\delta)$ together with a set $F$ of accepting states. 
A \emph{run} of $T$ on an infinite word $w=a_1 a_2\ldots\in \Sigma^\omega$
is an infinite sequence $\pi=q_1 q_2\ldots\in Q^\omega$ of states such that
$q_1=\iota$ and $q_{i+1}\in \delta(q_i,a_i)$ for all $i\geq 1$.
The automaton recognizes
the ($\omega$-regular) language $L(\mathcal{A})$ of all words $w\in\Sigma^\omega$
for which there exists a run that visits states in $F$ infinitely often.
Dually, a \emph{co\mbox{-}B\"uchi automaton} recognizes the language of all words
admitting a run that visits rejecting states only finitely often.

A \emph{strongly connected component (SCC)} of a transition system $T=(\Sigma,Q,\iota,\delta)$
is a maximal set $S\subseteq Q$ of states such that every state in $S$ is reachable from every
other state in $S$. The SCC decomposition of $Q$ can be computed in time $\mathcal{O}(|Q|+|\delta|)$.
Reachability induces a partial order
on SCCs; a \emph{bottom SCC} is a minimal element of this order. 
A state is \emph{recurrent} if it can reach itself via a non-empty path,
and \emph{transient} otherwise.

An automaton $(T,F)$ is \emph{weak} if every SCC $S$ is either entirely accepting ($S\subseteq F$) or entirely rejecting ($S\cap F=\emptyset$)~\cite{DBLP:journals/ipl/Loding01}.\footnote{This definition is different from, though equivalent to, the one in Section 9.1.5 of~\cite{Hofmann2025}.}

For weak automata, B\"uchi and co\mbox{-}B\"uchi acceptance coincide: every infinite run that visits an accepting state 
infinitely often must eventually remain within an accepting SCC. Hence we simply refer to deterministic 
weak B\"uchi automata as \emph{deterministic weak automata} (DWA).

\paragraph{Closure properties of DWA.}

To translate obligation formulas into $\omega$-automata, we require closure under intersection and union.

Given deterministic transition systems $T_1=(\Sigma,Q_1,\iota_1,\delta_1)$ and $T_2=(\Sigma,Q_2,\iota_2,\delta_2)$,
their product is the transition system 
$T_1\otimes T_2=(\Sigma,Q_1\times Q_2, (\iota_1,\iota_2), \delta_\otimes)$ where
$\delta_\otimes((q_1,q_2),a)=(\delta_1(q_1,a),\delta_2(q_2,a))$.

\begin{restatable}{lemma}{closure}
\label{lem:weak_closure}
The set of languages recognizable by DWA is closed under union, intersection,
and complement.
\end{restatable}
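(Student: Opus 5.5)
Complement is the easy case, so I handle it first. Given a DWA $(T,F)$ with $T=(\Sigma,Q,\iota,\delta)$, I take $(T,Q\setminus F)$. Weakness is preserved, because every SCC that lies entirely inside $F$ lies entirely outside $Q\setminus F$, and vice versa. For correctness, fix a word $w$ and let $\pi$ be its unique run. Since the automaton is weak, $\pi$ eventually stays inside one SCC, which is either accepting or rejecting. So $\pi$ visits $F$ infinitely often if and only if it visits $Q\setminus F$ only finitely often. This gives $L(T,Q\setminus F)=\Sigma^\omega\setminus L(T,F)$. Note that the argument relies on B\"uchi and co-B\"uchi acceptance coinciding for weak automata, as remarked above.

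For intersection and union, let $\mathcal{A}_1=(T_1,F_1)$ and $\mathcal{A}_2=(T_2,F_2)$ be DWA over the same alphabet. I use the product $T_1\otimes T_2$ with acceptance sets $F_\cap=F_1\times F_2$ and $F_\cup=(F_1\times Q_2)\cup(Q_1\times F_2)$. The run of the product on $w$ is the pairing of the two component runs. Each component run eventually stays in a single SCC, and that SCC is either accepting or rejecting. Hence from some position on, membership of the current state in $F_1$ and in $F_2$ is constant. It follows that the product run visits $F_\cap$ (respectively $F_\cup$) infinitely often if and only if both (respectively at least one) of the component runs visit their own $F_i$ infinitely often. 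So the product recognizes $L(\mathcal{A}_1)\cap L(\mathcal{A}_2)$ and $L(\mathcal{A}_1)\cup L(\mathcal{A}_2)$ respectively. Once intersection and complement are in place, union also follows by De Morgan, so the union construction is only a convenience.

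The main step, and the only non-routine one, is showing that the product is again weak. Let $S$ be an SCC of $T_1\otimes T_2$ and let $(p_1,p_2),(q_1,q_2)\in S$. Any path in the product from $(p_1,p_2)$ to $(q_1,q_2)$ projects onto a path in $T_1$ from $p_1$ to $q_1$, and the same holds in the reverse direction. Therefore the projection of $S$ onto the first component lies inside a single SCC of $T_1$. There is one subtlety: if $S$ is a singleton transient state, its projection may be a transient state of $T_1$. But a transient state forms its own trivial SCC, which the definition also requires to be entirely accepting or entirely rejecting, so this case is covered. Consequently, all states of $S$ agree on membership in $F_1$, and by the symmetric argument on membership in $F_2$. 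Since $F_\cap$ and $F_\cup$ are Boolean combinations of these two memberships, $S$ is either contained in or disjoint from each of them. This establishes weakness of the product and completes the proof.
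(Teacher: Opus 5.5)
Your proposal is correct and takes essentially the same route as the paper: complement by replacing $F$ with $Q\setminus F$, the product $T_1\otimes T_2$ with acceptance sets $F_1\times F_2$ and $(F_1\times Q_2)\cup(Q_1\times F_2)$, and weakness of the product by projecting mutually reachable product states onto mutually reachable component states. You are slightly more careful than the paper in explicitly using weakness (membership in $F_i$ is eventually constant) for the intersection case, where the paper's $\mathsf{Inf}$-set equivalence silently relies on it; your remark about transient singletons is harmless but unnecessary, since a singleton SCC is trivially either accepting or rejecting.
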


\begin{proof}{\emph{(Sketch)}} 
Complementation is obtained by swapping accepting and rejecting states. Union and intersection are realized via the product construction with accepting sets $F_1\times F_2$ (intersection) and $(Q_1\times F_2)\cup(F_1\times Q_2)$ (union). All constructions preserve weakness.
\end{proof}

\paragraph{DWA minimization.}

For every DWA $\mathcal{A}$, there exists a unique minimal equivalent DWA
that can be computed
in time $\mathcal{O}(n\log n)$, where $n$ is the number of states~\cite{DBLP:journals/ipl/Loding01}.
DFA minimization using the Hopcroft algorithm can be applied once the automaton is brought into a suitable normal form. 
This form can be obtained in linear time by marking transient states accepting or rejecting according to their \emph{rank}.
Intuitively, the rank of a state $q$ is the maximum number of alternations between accepting and rejecting recurrent states
along runs starting from $q$. States with different ranks are not equivalent, as higher ranks induce additional accepted or 
rejected words. Transient states are marked accepting iff their rank is even, 
allowing Hopcroft's algorithm to merge them with maximally ranked successors.

Formally, define $\rank(S)$ for SCCs $S$ inductively.
Bottom accepting and rejecting SCCs have rank $0$ and $1$, respectively.
For a non-bottom SCC $S$, let $l(S)$ be the maximum rank among its successors in the SCC decomposition.
If $S$ is transient, set $\rank(S)=l(S)$. 
If $S$ is recurrent, set $\rank(S)=l(S)+1$ when parity disagrees with acceptance,
and $\rank(S)=l(S)$ otherwise.
The rank of a state is the rank of its SCC and can be computed in time $\mathcal{O}(|\mathcal{A}|)$. 
Hence, recurrent states have even rank iff they are accepting.

Given a DWA $\mathcal{A}=(T,F)$ with state set $Q$, define $F'=\{q\in Q\mid \rank(q)\text{ is even}\}$
and put $\mathcal{A'}=(T,F')$; this does not change the marking of recurrent states. Applying Hopcroft minimization to $\mathcal{A'}$ yields $\mathcal{A}_{\min}$.

\begin{lemma}\cite{DBLP:journals/ipl/Loding01}\label{lem:DWAmincorrectness}
$\mathcal{A_{\min}}$ is the minimal DWA equivalent to $\mathcal{A}$.
\end{lemma}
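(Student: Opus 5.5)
We need to show three things about $\mathcal{A}_{\min}$: it is equivalent to $\mathcal{A}$, it is weak, and no DWA for $L(\mathcal{A})$ has fewer states. The plan is to establish these in that order, with the last one carrying most of the work.

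\textbf{Step 1: $\mathcal{A}'$ is an equivalent weak automaton.} Reclassifying a transient state never changes acceptance of an infinite run, because such a run visits each transient state at most once and so eventually stays within one recurrent SCC. Recurrent states keep their marking, since for them even rank coincides with acceptance. Hence $L(\mathcal{A}')=L(\mathcal{A})$. Weakness is immediate: a transient state forms a singleton SCC, and every recurrent SCC is uniformly marked, as in $\mathcal{A}$.

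\textbf{Step 2: a semantic characterization of ranks.} For each state $q$, let $L_q$ be the language accepted from $q$. I want to show that the rank (restricted to accessible states, after trimming unreachable ones) depends only on $L_q$. The plan is to prove that $\rank(q)$ equals the maximal $k$ for which there is a finite word $u$ and words $v_1,\dots,v_k$ such that the ultimately periodic words $u v_1^\omega$, $u v_1 v_2^\omega$, and so on, alternate between belonging to $L_q$ and not, with the parity of the last one fixed by whether rank $0$ is accepting. I would prove this by induction along the SCC order, realizing each alternation as a cycle in a recurrent SCC of the appropriate acceptance. Since this quantity is defined from $L_q$ alone, states with $L_q=L_p$ have equal rank.

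\textbf{Step 3: Hopcroft on $\mathcal{A}'$ computes exactly language equivalence.} Hopcroft merges $q$ and $p$ iff they have the same DFA residual with respect to $F'$, i.e.\ iff for every finite word $w$ the states reached from $q$ and $p$ by $w$ agree on $F'$. This holds iff all reachable pairs have equal rank parity. I would show it holds iff $L_q=L_p$.

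For the direction ($\Leftarrow$): if $L_q=L_p$, then $L_{\delta(q,w)}=L_{\delta(p,w)}$ for every $w$. By Step 2 the corresponding ranks are equal, so their parities, and hence their $F'$-membership, agree.

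For the direction ($\Rightarrow$): suppose all reachable pairs from $(q,p)$ agree on $F'$. Take an infinite word $\alpha$. Its run from $q$ and its run from $p$ eventually stabilize in recurrent SCCs. Beyond the point where both have stabilized, both runs are in recurrent states, whose $F'$-marking equals their acceptance. Since the pair agrees on $F'$ at every step, $\alpha\in L_q$ iff $\alpha\in L_p$.

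\textbf{Step 4: minimality.} By Step 3, the quotient automaton has exactly as many states as there are distinct residual $\omega$-languages $L_q$ among the accessible states. Any DWA for $L$ must realize each residual $u^{-1}L$ by a distinct state, so this number is a lower bound. Hence the quotient is minimal. Its uniqueness up to isomorphism follows from the standard residual argument.

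It remains to check that the quotient is still weak. Merged states have the same $F'$-status, so the quotient marking is well defined. Any cycle in the quotient lifts to a path in $\mathcal{A}'$ that eventually enters a recurrent SCC and passes only through equivalent states. Its states therefore carry the same marking, and the quotient SCC is uniform.

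\textbf{Main obstacle.} I expect Step 2 to be the hardest: proving that the syntactic rank equals the semantic alternation depth. This requires carefully choosing maximal-rank successors, and then handling the fact that a transient SCC takes the maximum rank of its successors while a recurrent SCC increments the rank only when parity disagrees.
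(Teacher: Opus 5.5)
The paper does not prove this lemma; it cites L\"oding. Your overall architecture is the right one: Step~1 shows that $\mathcal{A}'$ is an equivalent weak automaton, Step~3 shows that $F'$-equivalence coincides with $\omega$-language equivalence, and Step~4 argues via residuals that the quotient is weak. Each of these is sound as you describe it, provided ranks are invariant under language equivalence. The genuine gap is Step~2: the semantic characterization you propose is false as stated.

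Witnesses of the form $uv_1^\omega, uv_1v_2^\omega,\dots$ do not force the run of the $(i{+}1)$st word to settle in an SCC reachable from the SCC where the $i$th word settles. As a result, branching at a transient state gets counted as an alternation. Take $\Sigma=\{a,b\}$, $L_1=a^\omega$ and $L_2=aa\Sigma^\omega$.
\begin{itemize}
\item[--] For both languages, $u=\varepsilon$, $v_1=a$, $v_2=b$ gives $a^\omega\in L_j$ and $ab^\omega\notin L_j$, and one checks that no alternating sequence of length three exists.
\item[--] The minimal DWA for $L_1$ is an accepting $a$-loop above a rejecting sink, so its initial state has rank $2$.
\item[--] The minimal DWA for $L_2$ has two transient states above an accepting and a rejecting sink, so its initial state has rank $1$.
\end{itemize}
So no normalization of your $k$ that depends only on the witnessed pattern can equal the rank. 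The inequality ``semantic depth $\le\rank$'' is the one that breaks.

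To repair Step~2, first unfold the inductive definition: $\rank(q)$ is the maximum of $m-1+[S_m\cap F=\emptyset]$ over chains $S_1,\dots,S_m$ of recurrent SCCs that are reachable from $q$ in this order and have alternating acceptance. Then force the witnesses through cycles by pumping. Ask for words $x_1,y_1,\dots,x_m,y_m$ with $y_i\neq\varepsilon$ such that, for all sufficiently large $n$, the words $x_1y_1^{n}\cdots x_{i-1}y_{i-1}^{n}x_iy_i^{\omega}$ (for $1\le i\le m$) alternate in membership in $L_q$. This quantity is still defined from $L_q$ alone.
\begin{itemize}
\item[--] For ``$\ge$'', take each $y_i$ to be a loop at a state of $S_i$; the required alternation then holds for every $n$.
\item[--] For ``$\le$'', pick $n\ge|Q|$. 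The state reached after $y_i^{n}$ then lies on the cycle on which $y_i^\omega$ settles. Hence the limit SCCs of successive words form a reachability chain of pairwise distinct recurrent SCCs with alternating acceptance.
\end{itemize}
Alternatively, you can skip the semantic characterization entirely. If $L_q=L_p$, run both states in parallel, and at each step pump the loop word until the $p$-component also sits on a cycle. The reached pair again has equal languages, which forces that cycle to have the same acceptance, so every alternating chain from $q$ transfers to $p$.

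With Step~2 repaired, the rest of your argument goes through. One minor point: ``unique up to isomorphism'' holds only modulo the marking of transient states, which any DWA may choose freely. The residual argument fixes the transition structure and the marking of recurrent states, not the marking of transient ones.
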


\begin{corollary}\label{cor:DWAmin}
Deterministic weak automata with $n$ states can be minimized in time $\mathcal{O}(n\log n)$.    
\end{corollary}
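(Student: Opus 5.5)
The corollary follows by chaining the preprocessing step with Hopcroft's algorithm and checking the running time of each stage. By Lemma~\ref{lem:DWAmincorrectness}, the automaton $\mathcal{A}_{\min}$ obtained by re-marking states according to rank and then running Hopcroft minimization is the minimal DWA equivalent to $\mathcal{A}$. So correctness is already settled, and only the bound $\mathcal{O}(n\log n)$ on the total time needs to be established.

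\textbf{Computing $F'$.} The first stage computes the re-marked acceptance set $F'$.
\begin{enumerate}
\item Compute the SCC decomposition of the transition system of $\mathcal{A}$. This takes time $\mathcal{O}(|Q|+|\delta|)$, as recalled above.
\item For each SCC, decide whether it is transient or recurrent. A singleton SCC is recurrent exactly when it has a self-loop, so this is a linear check.
\item Compute $\rank(S)$ by processing the SCC DAG in reverse topological order. Each SCC takes the maximum rank over its successors and adds one exactly when it is recurrent and the parity disagrees with its acceptance. Each SCC edge is inspected once, so this is again linear.
\item Set $F'=\{q\mid\rank(q)\text{ even}\}$, in linear time.
\end{enumerate}
For a deterministic automaton over a fixed alphabet $\Sigma$, we have $|\delta|=n\cdot|\Sigma|$. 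The whole preprocessing stage is therefore $\mathcal{O}(n)$, treating $|\Sigma|$ as a constant, as is standard for Hopcroft's bound.

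\textbf{Hopcroft minimization.} The second stage applies Hopcroft's partition-refinement algorithm to the DFA-like structure $(T,F')$. Its running time is $\mathcal{O}(n\log n)$ for a fixed alphabet. Summing the two stages gives $\mathcal{O}(n)+\mathcal{O}(n\log n)=\mathcal{O}(n\log n)$.

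The only point needing care is the accounting of the alphabet size. The statement hides $|\Sigma|$ in the constant, matching the cited bound of L\"oding. With an explicit alphabet the bound reads $\mathcal{O}(|\Sigma|\,n\log n)$, and the argument is otherwise unchanged. No step is a genuine obstacle, since all the substantive work is contained in Lemma~\ref{lem:DWAmincorrectness}.
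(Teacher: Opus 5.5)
Your proposal is correct and follows the same route as the paper. The paper proceeds by linear-time rank-based re-marking of transient states via the SCC decomposition, then Hopcroft minimization in $\mathcal{O}(n\log n)$, with correctness delegated entirely to L\"oding's result (Lemma~\ref{lem:DWAmincorrectness}); your uniform rank rule reproduces the paper's base case, since bottom SCCs of a complete deterministic automaton are recurrent and the maximum over an empty set of successors can be taken as $0$, and your remark that the bound is really $\mathcal{O}(|\Sigma|\,n\log n)$ makes explicit what the paper leaves implicit.
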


\paragraph{From obligation formulas to DWA.}

Recall that obligation formulas are Boolean combinations
of components $\mathbb{Q}\Phi$, where $\Phi$ is a finite-trace formula
and $\mathbb{Q}\in\{\exists,\forall\}$ a guarantee or safety trace quantifier.
We first translate individual components into equivalent $\omega$-automata.

Given an \LTLf formula $\Phi$ over $AP$, let 
$\mathcal{A}_\Phi=(T_\Phi,F)$ with $T_\Phi=(2^{AP},Q,\iota,\delta)$ denote an equivalent DFA of size
$2^{2^{\mathcal{O}(|\varphi|)}}$,
represented symbolically.
Assume $\iota\notin F$ when $\mathbb{Q}=\exists$
and $\iota\in F$ when $\mathbb{Q}=\forall$.

Let ${T}^+_\Phi=(2^{AP},Q,\iota,\delta^+)$ and 
${T}^-_\Phi=(2^{AP},Q,\iota,\delta^-)$
be obtained
from $T_\Phi$ by turning accepting (respectively, rejecting) states into absorbing sinks:
define $\delta^+(q,a)=q$ if $q\in F$ and
$\delta^+(q,a)=\delta(q,a)$ otherwise, and
$\delta^-(q,a)=q$ if $q\notin F$ and 
$\delta^-(q,a)=\delta(q,a)$ otherwise.

\begin{restatable}{lemma}{ftctoaut}
\label{lem:ftc_to_automaton}
Let $\Phi$ be an \LTLf formula and $\mathbb{Q}\in\{\exists,\forall\}$. 
Then there exists a DWA $\mathcal{A}_{\mathbb{Q}\Phi}$ 
equivalent to $\mathbb{Q}\Phi$ such that
\begin{itemize}
\item[--] if $\mathbb{Q}=\exists$, then 
$\mathcal{A}_{\mathbb{Q}\Phi}=(T^+_\Phi,F_\Phi)$;
\item[--] if $\mathbb{Q}=\forall$, then
$\mathcal{A}_{\mathbb{Q}\Phi}=(T^-_\Phi,F_\Phi)$.
\end{itemize}
\end{restatable}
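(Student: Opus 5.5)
We need to show two things for each quantifier: the automaton $(T^{\pm}_\Phi,F)$, where $F_\Phi$ is the accepting set $F$ of the DFA $\mathcal{A}_\Phi$, is weak, and it recognizes $\mathbb{Q}[\Phi]$. The approach is to compare runs of the modified transition systems with runs of $T_\Phi$ on prefixes. The key invariant is the following. For an infinite word $w=a_1a_2\ldots$, let $q_1q_2\ldots$ be the run of $T_\Phi$ and $q^+_1q^+_2\ldots$ the run of $T^+_\Phi$. Then by induction on $i$ we get $q^+_i=q_i$ as long as no $q_j$ with $j<i$ lies in $F$. Once some $q_j\in F$, we have $q^+_i=q^+_j\in F$ for all $i\geq j$. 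Symmetrically, for $T^-_\Phi$ the runs agree until the first rejecting state, which then becomes absorbing. Note that $q_{i+1}$ is the state reached by $\mathcal{A}_\Phi$ after reading the prefix $a_1\ldots a_i$, so $q_{i+1}\in F$ iff $a_1\ldots a_i\in[\Phi]$.

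For weakness, we argue that in $T^+_\Phi$ every state of $F$ is an absorbing sink. A state of $F$ therefore forms its own singleton SCC, which lies inside $F$. Any nontrivial SCC cannot contain a state of $F$, since such a state has no edge leaving it. Hence every SCC is entirely accepting or entirely rejecting. The same argument with the roles exchanged applies to $T^-_\Phi$, where rejecting states are sinks and so every non-singleton SCC lies within $F$.

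For the language equality with $\mathbb{Q}=\exists$, note that if some prefix is in $[\Phi]$, then the run reaches $F$ and stays there forever, so it is accepted. Conversely, if the run visits $F$ infinitely often, it visits $F$ at all, and the first such visit is at some $q_{i+1}$ of the original run, which witnesses a prefix in $[\Phi]$. Here the assumption $\iota\notin F$ matters: it guarantees that the witness corresponds to a prefix (position $i\ge1$) and not just to the empty word, consistent with the convention that prefixes are the ones read. If the convention includes the empty prefix, the assumption is harmless.

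For $\mathbb{Q}=\forall$, suppose all prefixes are accepted. Then no rejecting state is ever visited, using $\iota\in F$ for the start. The run therefore coincides with the original run and stays in $F$, so it is accepted. If instead some prefix is rejected, the run enters a rejecting sink and remains there, so it visits $F$ only finitely often and is rejected.

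The main obstacle is only the bookkeeping between prefix indices and run positions, in particular the treatment of the empty prefix via the assumptions on $\iota$. I would fix the convention that the state after $i$ letters corresponds to the prefix of length $i$ and handle the initial state explicitly. Everything else follows directly from the invariant.
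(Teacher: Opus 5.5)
Your proposal is correct and takes essentially the same route as the paper. Both arguments use that accepting (resp.\ rejecting) states are absorbing in $T^{+}_\Phi$ (resp.\ $T^{-}_\Phi$), so ``eventually visits $F$'' (resp.\ ``never leaves $F$'') coincides with ``eventually stays in $F$ forever,'' and both get weakness from the fact that any SCC containing a sink is a singleton. Your explicit run-coincidence invariant and your handling of the empty prefix via the assumptions on $\iota$ are slightly more careful than the paper's write-up, which leaves that bookkeeping implicit.
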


\begin{proof}{\emph{(Sketch)}}
Let $\mathcal{A}_\Phi=(T_\Phi,F_\Phi)$ be a DFA for $\Phi$.  
If $\mathbb{Q}=\exists$, an infinite trace satisfies $\exists\Phi$ iff some finite prefix is accepted by $\mathcal{A}_\Phi$; making accepting states absorbing in $T^+_\Phi$ yields a DWA that accepts exactly those runs that eventually remain in $F_\Phi$.  
If $\mathbb{Q}=\forall$, a trace satisfies $\forall\Phi$ iff every prefix is accepted; making rejecting states absorbing in $T^-_\Phi$ ensures that leaving $F$ is permanent, so acceptance coincides with staying in $F_\Phi$ forever.  
In both constructions, SCCs do not mix accepting and rejecting states due to the absorbing sinks, hence the automata are weak.
\end{proof}

This leads to the following transformation from obligation \LTLfp formulas to weak automata.

\begin{corollary}\label{cor:to_automata}
Every obligation \LTLfp formula $\Psi$ can be translated into
an equivalent DWA
of size $2^{2^{\mathcal{O}(|\Psi|)}}$.
\end{corollary}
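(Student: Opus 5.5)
The plan is a structural induction on the obligation formula $\Psi$, combining Lemma~\ref{lem:ftc_to_automaton} for the leaves with Lemma~\ref{lem:weak_closure} for the Boolean connectives. The size bound needs one extra step, because iterated products multiply state counts.

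First, push negations to the components. Using $\neg\exists\Phi \equiv \forall\neg\Phi$ and $\neg\forall\Phi\equiv\exists\neg\Phi$, together with De~Morgan's laws, rewrite $\Psi$ as a positive Boolean combination of components $\mathbb{Q}_1\Phi_1,\ldots,\mathbb{Q}_k\Phi_k$. Here $k\le|\Psi|$ and $\sum_i|\Phi_i|=\mathcal{O}(|\Psi|)$, since each negation adds only a constant. Alternatively, negation can be kept and handled by complementing the DWA, which does not change the state count.

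Second, for each component, build the DFA $\mathcal{A}_{\Phi_i}$ with $2^{2^{\mathcal{O}(|\Phi_i|)}}$ states. Apply Lemma~\ref{lem:ftc_to_automaton} to obtain a DWA $\mathcal{A}_{\mathbb{Q}_i\Phi_i}$ on the same state set, so it has the same size. That lemma assumes $\iota\notin F$ (for $\exists$) or $\iota\in F$ (for $\forall$). I will note that this is harmless for the size bound. If the assumption fails, the evaluation convention for the empty prefix decides the component outright, and we can use a trivial one-state automaton or add a fresh initial state.

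Third, combine the components bottom-up along the formula tree using the product constructions of Lemma~\ref{lem:weak_closure}. These preserve weakness and equivalence, so by induction the resulting DWA recognizes $[\Psi]$. Its state set is contained in $\prod_i Q_i$. This is the step that needs care. A naive bound $\prod_i 2^{2^{c|\Phi_i|}}$ equals $2^{\sum_i 2^{c|\Phi_i|}}$, and we must show it is $2^{2^{\mathcal{O}(|\Psi|)}}$. This holds because $\sum_i 2^{c|\Phi_i|}\le 2^{c\sum_i|\Phi_i|}\le 2^{c'|\Psi|}$, using $2^a+2^b\le 2^{a+b}$ for $a,b\ge1$. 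Reusing the product for repeated occurrences of the same component is unnecessary for the bound.

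I expect the main obstacle to be only the size bookkeeping in this third step, namely checking that the exponents add rather than multiply. The correctness of the construction follows immediately from the two cited lemmas.
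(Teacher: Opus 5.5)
Your proposal is correct and follows essentially the same route as the paper, which likewise translates each component via Lemma~\ref{lem:ftc_to_automaton} and combines the resulting DWA along the Boolean structure of $\Psi$ using the closure constructions of Lemma~\ref{lem:weak_closure}. Your explicit check that the exponents add, i.e.\ $\prod_i 2^{2^{c|\Phi_i|}}\le 2^{2^{c\sum_i|\Phi_i|}}$, and your treatment of the initial-state assumption fill in bookkeeping that the paper's one-line proof leaves implicit.
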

\begin{proof}
Translate each component $\mathbb{Q}\Phi$ using Lemma~\ref{lem:ftc_to_automaton} and combine the resulting automata according to the Boolean structure of $\Psi$ via the closure constructions of Lemma~\ref{lem:weak_closure}.
\end{proof}

The automata produced during this construction may be minimized at any stage using Corollary~\ref{cor:DWAmin}.

\paragraph{Structural properties of the constructed DWA.}
The automata constructed for a given obligation formula $\Psi$
via Corollary~\ref{cor:to_automata} have a particular internal structure.

Accepting states in the individual automata for guarantee finite trace components
are turned into accepting sinks. These sink states can be merged, leading to individual automata that have just a single accepting (sink) state and in which all
other states are rejecting. Dually, the individual automata for safety components can be assumed to have just a single rejecting sink state. In both cases, it is not possible to leave the sink state, once it has been reached.

After composing the individual automata using the constructions from Lemma~\ref{lem:weak_closure}, the resulting DWA is partitioned according to acceptance/non-acceptance of the individual automata. Let there be $k$ finite trace components. Then every set $\sigma\subseteq \{1,\ldots,k\}$ identifies a region in the composed automaton where component automata are in accepting states if and only if their index is contained in $\sigma$. Formally, let $Q_\Psi$ denote the state set of the composed DWA. Then every state in $Q_\Psi$ is of the shape 
$(q_1,\ldots,q_k)$ where $q_i$ is a state in $\mathcal{A}_{\mathbb{Q}_i\Phi_i}$.
Put $Q_\sigma=\{(q_1,\ldots,q_k)\in Q\mid q_i\in F_{\Phi_i}\text{ iff } i\in \sigma\}$.
Then the sets $Q_\sigma$ for $\sigma\subseteq \{1,\ldots,k\}$ form a partition of $Q_\Psi$.

Since all states in a single such set agree on acceptance of the individual component automata,
we have that each $Q_\sigma$ is either fully accepting ($Q_\sigma\subseteq F_\Psi$) or fully rejecting
($Q_\sigma\cap F_\Psi=\emptyset$), depending on whether $\sigma$ corresponds to a satisfying valuation of the Boolean formula
structure of $\Psi$. 

Furthermore, for all $i$ such that $\mathbb{Q}_i=\exists$ and $i\in \sigma$, or
$\mathbb{Q}_i=\forall$ and $i\notin\sigma$, all states in $Q_\sigma$ have the same
state as $i$th component -- it is the sink state of the automaton $\mathcal{A}_{\mathbb{Q}_i\Phi_i}$.

Since it is not possible to leave sink states in individual automata,
there is, for every strongly connected component $S$ in $\mathcal{A}_\Psi$, some
$\sigma\subseteq\{1,\ldots,k\}$ such that $S\subseteq Q_\sigma$.
Also, for all $\sigma, \sigma'\subseteq\{1,\ldots,k\}$ such that $\sigma\neq\sigma'$, 
$Q_\sigma$ may be reachable from $Q_{\sigma'}$ or vice versa, but not both.
Hence, reachability partially orders the sets $Q_\sigma$, arranging them in a DAG structure.

This does not provide a full SCC decomposition of $\mathcal{A}_\Psi$ since a single set $Q_\sigma$ 
may consist of several SCCs, which, however, are either all fully accepting or all fully rejecting.

\section{Synthesis via DWA}

We show how the synthesis problem for obligation \LTLfp\
can be reduced to the solution of games over deterministic weak automata (DWA). 
Our procedure builds on Lemma~\ref{lem:ftc_to_automaton} and
Corollary~\ref{cor:to_automata}. We introduce infinite-duration games over graphs, 
present the reduction to DWA games, and discuss several symbolic algorithms for solving such games.
In particular, we recall the classical solutions
via B\"uchi games and co\mbox{-}B\"uchi games
and propose a novel
symbolic algorithm that iteratively solves safety and reachability computations. 
These algorithms all have quadratic runtime. Finally, we show that
the novel algorithm can be made linear by restricting its fixpoint computations to individual SCCs
of the game graph; the resulting algorithm resembles the one described in~\cite{DBLP:conf/cav/AmramBFTVW21}.

Formally, a \emph{DWA} (or \emph{weak}) \emph{game} is a B\"uchi game $G=(A,F)$ that is induced by a DWA. Hence
every SCC of $A$ is either contained in $F$ or does not intersect with $F$. The objective of the system player
is to visit $F$ infinitely often (equivalently, to eventually remain within $F$ forever).

\paragraph{Synthesis as DWA games.}
We reduce the reactive synthesis problem for the obligation fragment of \LTLfp\ to the solution of DWA games.
Consider an input obligation formula $\Psi$ given in positive normal form, expressed as a positive Boolean
formula over $k$ components
$\mathbb{Q}_i\Phi_i$, where $\mathbb{Q}_i\in\{\exists,\forall\}$
and each $\Phi_i$ is an \LTLf\ formula.

The synthesis algorithm transforms $\Psi$ into an equivalent DWA (according to Corollary~\ref{cor:to_automata}) 
and then solves the DWA game induced by this automaton.

\paragraph{Step 1.} 
Convert each component $\mathbb{Q}_i\Phi_i$ into an equivalent DWA
$\mathcal{A}_{\mathbb{Q}_i\Phi_i}$ according to Lemma~\ref{lem:ftc_to_automaton}: Transform $\Phi_i$ into the DFA 
$(T_{\Phi_i}, F_{\Phi_i})$, where
$T_{\Phi_i}=(2^{AP},Q_{\Phi_i},\iota_{\Phi_i},\delta_{\Phi_i})$.
If $\mathbb{Q}_i=\forall$, put $\mathcal{A}_{\mathbb{Q}_i\Phi_i}=(T^-_{\Phi_i},F_{\Phi_i})$;
if $\mathbb{Q}_i=\exists$, put $\mathcal{A}_{\mathbb{Q}_i\Phi_i}=(T^+_{\Phi_i},F_{\Phi_i})$.

\paragraph{Step 2.} 
Compose the automata $\mathcal{A}_{\mathbb{Q}_i\Phi_i}$ according to the Boolean structure
of $\Psi$ (repeatedly using Lemma~\ref{lem:weak_closure}): Construct the DWA $\mathcal{A}_{\Psi}=(T_\Psi,F_\Psi)$ inductively as follows.
\begin{align*}
T_{\Psi_i\land\Psi_j}&=T_{\Psi_i\lor\Psi_j}=T_{\Phi_i}\otimes T_{\Phi_j}\\
F_{\Psi_i\land\Psi_j}&=F_{\Psi_i}\times F_{\Psi_j}\\
F_{\Psi_i\lor\Psi_j}&=F_{\Psi_i}\times Q_{\Phi_j}\cup Q_{\Phi_i}\times F_{\Psi_j}
\end{align*}
At any stage, the (partially) composed DWA can be minimized according to
Corollary~\ref{cor:DWAmin}.

\paragraph{Step 3.} Solve the DWA game induced by $\mathcal{A}_\Psi$. If the system player wins from the initial state 
$(\iota_1,\ldots,\iota_k)$ of $T_\Psi$,
extract a witnessing strategy.

\begin{restatable}{theorem}{synth}
\label{thm:synth}
The synthesis problem for the obligation fragment of \LTLfp\ 
can be decided symbolically via DWA games in \textsc{2ExpTime}.
\end{restatable}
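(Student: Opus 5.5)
The plan is to follow the three-step procedure above and establish two things: the procedure is correct, meaning the system wins the induced DWA game from the initial state iff $\Psi$ is realizable, and each step runs within doubly exponential time using symbolic operations. For correctness I will use Corollary~\ref{cor:to_automata}: $\mathcal{A}_\Psi$ is a deterministic weak automaton with $L(\mathcal{A}_\Psi)=[\Psi]$. This rests on two earlier results. Lemma~\ref{lem:ftc_to_automaton} handles the individual components $\mathbb{Q}_i\Phi_i$. Lemma~\ref{lem:weak_closure} shows that the product constructions of Step 2 realize conjunction and disjunction. If $\Psi$ is not already in positive normal form, I first push negations inward using $\neg\exists\Phi\equiv\forall\neg\Phi$ and $\neg\forall\Phi\equiv\exists\neg\Phi$, which increases the size only linearly.

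Next I relate synthesis strategies to game strategies via the arena $A_{T_\Psi}$ from the preliminaries. Since $T_\Psi$ is deterministic, every play from $\iota$ corresponds bijectively to a pair consisting of an infinite word $(x_1\cup y_1)(x_2\cup y_2)\ldots$ and the unique run of $T_\Psi$ on it. The play satisfies the B\"uchi objective $F_\Psi$ iff the word lies in $L(\mathcal{A}_\Psi)=[\Psi]$.

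The two directions then go as follows.
\begin{itemize}
\item[--] Suppose the system has a winning positional strategy in the game. Since $\delta$ is deterministic, the current state $q$ is a function of the history $y_1\ldots y_i$, so the strategy yields a synthesis strategy $\sigma(y_1\ldots y_i)$. Every outcome of $\sigma$ is then a compatible play and hence satisfies $\Psi$.
\item[--] Suppose $\sigma$ is a realizing synthesis strategy. The system then wins $\iota$ by choosing $x$ according to $\sigma$ on the history of environment choices. By determinacy of B\"uchi games this is equivalent to the existence of a positional winning strategy, which is exactly what Step 3 extracts.
\end{itemize}
The only subtlety is the order of moves: the system chooses $x_{i+1}$ before seeing $y_{i+1}$. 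This matches both the definition of outcomes and the order of nodes in the arena.

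For complexity, each $\Phi_i$ yields a DFA of size $2^{2^{\mathcal{O}(|\Phi_i|)}}$, constructed symbolically. The product of $k\le|\Psi|$ such DFAs has size $\prod_i 2^{2^{\mathcal{O}(|\Phi_i|)}}=2^{\sum_i 2^{\mathcal{O}(|\Phi_i|)}}\le 2^{2^{\mathcal{O}(|\Psi|)}}$; the sum must be bounded by $2^{\mathcal{O}(|\Psi|)}$, which holds because $\sum_i|\Phi_i|\le|\Psi|$. The arena $A_{T_\Psi}$ adds only factors $2^{|X|}$ and $2^{|X|+|Y|}$, which are single exponential. Any optional minimization via Corollary~\ref{cor:DWAmin} costs $\mathcal{O}(n\log n)$ and does not affect the bound.

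Finally, I solve the weak game symbolically with the standard nested fixpoint for B\"uchi games, $\nu Z.\,\mu Y.\,(F\cap \mathrm{CPre}(Z))\cup \mathrm{CPre}(Y)$. This needs at most quadratically many controllable-predecessor computations in the number of states. A polynomial number of symbolic operations on a doubly exponential arena stays within \textsc{2ExpTime}, and the positional strategy can be read off the last fixpoint iteration.

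I expect the main obstacle to be the precise bookkeeping rather than any conceptual step. There are two points to watch. First, the doubly exponential bound must survive the product over all components, which is the sum argument above. Second, the correspondence between synthesis strategies over $(2^Y)^*$ and positional game strategies must line up with the system-first move order.
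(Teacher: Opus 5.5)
Your proposal is correct and follows essentially the same route as the paper. It establishes equivalence of $\mathcal{A}_\Psi$ with $\Psi$ via Lemmas~\ref{lem:ftc_to_automaton} and~\ref{lem:weak_closure}, uses the standard correspondence between synthesis strategies and winning strategies in the induced game, and gives a doubly exponential bound on the arena; your product-size and move-order bookkeeping just spells out what the paper's proof asserts.

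The one difference is the game solver. You use the quadratic B\"uchi fixpoint (Corollary~\ref{cor:fpsolution}), whereas the paper uses the linear SCC-based Algorithm~\ref{alg:weakSCC} (Lemma~\ref{lem:weakSCCsolution}). Your choice suffices for \textsc{2ExpTime} and does not even need weakness of the arena. It does not, however, recover the paper's sharper claim that solving the game is linear in the size of the DWA.
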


More precisely, consider an input formula $\Psi$
of size $n$.
The constructed DWA game over $\mathcal{A}_\Psi$ has size at most
$n'=2^{2^n}$. 
Notably, this bound depends neither on
the number of finite-trace components in $\Psi$ nor on its Boolean structure, but only on the overall size of $\Psi$.
By Lemma~\ref{lem:weakSCCsolution} below, the DWA game can be solved symbolically in time
$\mathcal{O}({n'})$.

Therefore, the worst-case time complexity for synthesizing obligation properties
matches that of reactive synthesis for LTLf.

\begin{remark}
All steps in the described synthesis algorithm admit symbolic implementations.
\end{remark}

\begin{remark}
For a realizable specification $\Psi$, the corresponding transducer is obtained by extracting a memoryless winning strategy 
for the system player in the DWA game induced by $\mathcal{A}_\Psi$ (see, e.g., the proof of Lemma~\ref{lem:weakSCCsolution}).
\end{remark}

\paragraph{Solution of DWA Games.}

We now present several symbolic algorithms for solving DWA games.

Since DWA games can be viewed as both B\"uchi and co\mbox{-}B\"uchi games with a particular structure,
they can be solved using standard algorithms for these classes. In the worst case, however, this is unnecessarily expensive,
as the best known algorithms for (co-)B\"uchi games have quadratic runtime in the arena size.
Nevertheless, we briefly recall the classical nested fixpoint algorithms. 

Next, we propose an apparently novel algorithm that
solves DWA games via iterative safety and reachability computations,
thereby avoiding the computation of nested fixpoints. 
Although its worst-case runtime is again quadratic, the algorithm may perform better in practice
because the intermediately computed winning regions grow monotonically.

Finally, we recall a specialized algorithm 
for DWA games that decomposes the arena into SCCs and solves the resulting DAG,
applying safety and reachability computations to single SCCs in a bottom-up manner, 
thereby achieving linear worst-case runtime.

We begin by defining monotone operators for the
symbolic computation of one-step strategies in a given
game arena $A=(V,E)$:
\begin{align*}
\Diamond W &= \{v\in V\mid E(v)\cap W\neq \emptyset \}, \\
\Box W &= \{v\in V\mid E(v)\subseteq W\},\\
\mathsf{CPre}_s(W) &= (V_s\cap \Diamond W)\cup (V_e\cap \Box W),\\
\mathsf{CPre}_e(W) &= (V_s\cap \Box W)\cup (V_e\cap \Diamond W),
\end{align*}
for $W\subseteq V$. Here, $\Diamond$ computes the set of nodes with an outgoing edge to the argument set,
while $\Box$ computes the set of nodes whose outgoing edges all lead to the argument set.
Consequently, the controllable predecessor operators $\mathsf{CPre}_s$ and $\mathsf{CPre}_e$
compute the nodes from which the system and environment, respectively, have a one-step strategy
to reach the argument set.

A \emph{symbolic operation} is the evaluation of an expression of the form
$W_1\cap W_2$, $W_1\cup W_2$, or $\neg W_1$ for symbolic sets $W_1,W_2$ (encoded, e.g., by BDDs).
We measure time complexity by the number of symbolic operations.
Under this measure, computing $\mathsf{CPre}_s(W)$
or $\mathsf{CPre}_e(W)$ requires symbolic time $\mathcal{O}(1)$.

Let $f:2^V \to 2^V$ be a monotone function. Its extremal (\emph{least} and \emph{greatest}) fixpoints
are defined as
\begin{align*}
\mu X.\,f(X) &= \{Z\subseteq V\mid f(Z)\subseteq Z\} = f^{|V|}(\emptyset)\\
\nu X.\,f(X) &= \{Z\subseteq V\mid Z\subseteq f(Z)\} = f^{|V|}(V)
\end{align*}
where $f^{i+1}(Z)=f(f^i(Z))$ for $i\geq 0$ and $f^0(Z)=Z$.
Hence, a single extremal fixpoint over $V$ can be computed in symbolic 
time $\mathcal{O}(|V|)$, assuming that $f$ can be evaluated in constant symbolic time.

We recall standard fixpoint constructions for games, and associated results. Define
\begin{align*}
\text{Reach}(W,T)&=\mu X. T\cup (W\cap \mathsf{CPre}_s(X)),\\
\text{Safe}(W,T)&=\nu X. T\cup (W\cap\mathsf{CPre}_s(X)),\\
\text{B\"uchi}(T)&=\nu X.\mu Y. (T\cap \mathsf{CPre}_s(X))\cup \mathsf{CPre}_s(Y),\\
\text{co\mbox{-}B\"uchi}(T)&=\mu X.\nu Y. (F\cap \mathsf{CPre}_s(Y))\cup \mathsf{CPre}_s(X).
\end{align*}
Here, $\text{Reach}(W,T)$ computes the system player's winning region in a game over $W\cup T$ with
the objective to eventually reach $T$, 
while $\text{Safe}(W,T)$ computes the winning region with the objective to
either remain in $W$ forever or eventually reach $T$.
Both sets can be computed in symbolic time $\mathcal{O}(|W|)$.

\begin{lemma}
Let $A$ be an arena and $F$ a set of game nodes.
Then $\text{B\"uchi}(F)$ is the winning region of the system player in the
B\"uchi game $G=(A,F)$, while $\text{co\mbox{-}B\"uchi}(F)$
is the winning region of the system player in the
co\mbox{-}B\"uchi game $G=(A,F)$.
\end{lemma}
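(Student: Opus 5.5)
The plan is to verify that the two nested fixpoint expressions compute the system player's winning region, by proving two inclusions for each formula. In each case I will exhibit a positional winning strategy on the computed set, and a spoiling strategy for the environment on its complement. Two facts are used throughout. Reach and Safe compute the attractor and safety winning regions; this follows by a standard induction on the approximants $f^i(\emptyset)$ and $f^i(V)$. Because $\mathsf{CPre}_s$ and $\mathsf{CPre}_e$ are dual, $V\setminus\mathsf{CPre}_s(W)=\mathsf{CPre}_e(V\setminus W)$, which holds since $E(v)\neq\emptyset$ for all $v$. (In the co-B\"uchi formula I read the set $F$ as the parameter $T$.)

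\textbf{B\"uchi.} Let $Z=\text{B\"uchi}(F)$. Since $Z$ is the greatest fixpoint, $Z=\mu Y.(F\cap\mathsf{CPre}_s(Z))\cup\mathsf{CPre}_s(Y)$. I stratify $Z$ by the approximants $Y_0=\emptyset\subseteq Y_1\subseteq\dots$ and assign to each $v\in Z$ a rank, namely the least $i$ with $v\in Y_i$. The strategy is as follows.
\begin{itemize}
\item[--] At a node of rank $i$ lying in $\mathsf{CPre}_s(Y_{i-1})$, the system moves to rank $<i$.
\item[--] At a node lying in $F\cap\mathsf{CPre}_s(Z)$, the system stays inside $Z$.
\end{itemize}
Every compatible play then stays in $Z$. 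Along it the rank strictly decreases until an $F$-node is hit, so $F$ is visited infinitely often.

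For the converse, let $U=V\setminus Z$. By the fixpoint property and duality, $U$ is the union of the approximants of a $\mu$-computation for the environment. From a node of $U$, the environment can ensure either staying in $U\setminus F$ forever, or reaching a node where it can force leaving $Z$'s complement structure back into a smaller approximant of $U$. Concretely, I will show by induction on the outer $\nu$-iteration that the nodes removed at stage $j$ are won by the environment: from them the environment forces either $V\setminus F$ forever or a move into nodes removed earlier, whose rank is lower. This outer induction is the main obstacle, because the inner fixpoint must be re-solved against a shrinking $X$. I would handle it by inducting on the outer iteration index and, for a fixed outer stage, using the complement of the inner least fixpoint as a safety region for the environment in which $F$-nodes cannot escape into $X$.

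\textbf{co-B\"uchi.} This case is symmetric. For $Z=\mu X.\nu Y.(F\cap\mathsf{CPre}_s(Y))\cup\mathsf{CPre}_s(X)$, I rank nodes by the outer $\mu$-stage at which they enter $Z$. A node that first enters at stage $i$ lies in the inner greatest fixpoint. From there the system either remains in $F$ forever via the $\nu Y$-part, or moves into a strictly earlier stage. Since the stage can decrease only finitely often, the play eventually stays in $F$. The complement argument is the dual of the B\"uchi one: applying duality turns the formula into the environment's B\"uchi formula for $V\setminus F$, so the first part already gives the environment a winning strategy on $V\setminus Z$.

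Finally, positional determinacy is only needed to conclude that the two regions partition $V$, which the two inclusions in each case already give. The strategies constructed are memoryless, matching the statement that positional strategies suffice.
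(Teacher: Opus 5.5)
The paper gives no proof of this lemma; it is recalled as the classical fixpoint characterization of B\"uchi and co-B\"uchi winning regions. Your argument is the standard one and is correct. Ranking by approximants yields positional system strategies on the computed sets, and dualization yields environment strategies on the complements. Reading the $F$ in the paper's co-B\"uchi formula as the parameter $T$ is the right way to handle that typo.

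There are three small points to tighten.

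First, in the B\"uchi complement, replace the sentence about ``leaving $Z$'s complement structure'' with the concrete statement you end up using. Take outer approximants $X_0=V\supseteq X_1\supseteq\cdots$. The fixpoint equation gives $V\setminus X_{j+1}=\bigl((V\setminus F)\cup\mathsf{CPre}_e(V\setminus X_j)\bigr)\cap\mathsf{CPre}_e(V\setminus X_{j+1})$. So the environment keeps the play inside $V\setminus X_{j+1}$ and from every $F$-node forces it into $V\setminus X_j$. Hence the least $j$ with $v\notin X_j$ is non-increasing along the play and strictly drops after each visit to $F$.

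Second, in the co-B\"uchi complement, dualizing gives $\nu X'.\mu Y'.\bigl((V\setminus F)\cup\mathsf{CPre}_e(Y')\bigr)\cap\mathsf{CPre}_e(X')$. This is not literally the environment's B\"uchi formula $\nu X'.\mu Y'.\bigl((V\setminus F)\cap\mathsf{CPre}_e(X')\bigr)\cup\mathsf{CPre}_e(Y')$. The two coincide because at an outer fixpoint all inner approximants lie inside it, so the extra conjunct $\mathsf{CPre}_e(X')$ is absorbed. Alternatively, run your rank construction directly on the dual form.

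Third, $V\setminus\mathsf{CPre}_s(W)=\mathsf{CPre}_e(V\setminus W)$ holds for any edge relation. Where $E(v)\neq\emptyset$ is really needed is $\mathsf{CPre}_s(\emptyset)=\emptyset$ (and dually $\mathsf{CPre}_e(\emptyset)=\emptyset$). This guarantees that rank-$1$ nodes lie in $F\cap\mathsf{CPre}_s(Z)$, and that nodes removed at the first outer stage are not in $F$.
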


\begin{corollary}\label{cor:fpsolution}
B\"uchi games and co\mbox{-}B\"uchi games with $n$ nodes can be solved in symbolic time
$\mathcal{O}(n^2)$.
\end{corollary}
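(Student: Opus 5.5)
The corollary follows from the preceding lemma, so the plan is simply to count the symbolic operations needed to evaluate the nested fixpoint formulas $\text{B\"uchi}(F)$ and $\text{co\mbox{-}B\"uchi}(F)$. That lemma already identifies these sets as the winning regions of the system player. Solving the game then means computing these sets and, in addition, extracting a positional strategy. I assume a positional strategy exists by positional determinacy, as recalled in the preliminaries.

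\textbf{Step 1: cost of the fixpoint evaluation.} Take the B\"uchi formula $\nu X.\mu Y.\,(F\cap\mathsf{CPre}_s(X))\cup\mathsf{CPre}_s(Y)$. I will evaluate it by Kleene iteration. For any fixed value of $X$, the inner function in $Y$ is monotone and takes constant symbolic time, since it uses one $\mathsf{CPre}_s$, one intersection and one union. Its least fixpoint therefore stabilizes after at most $n$ iterations, because each non-final iteration adds at least one node. This costs $\mathcal{O}(n)$ symbolic operations.

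The outer greatest fixpoint is monotone in $X$, since the inner least fixpoint depends monotonically on $X$. Iterating it from $V$ therefore gives a decreasing chain, which can strictly shrink at most $n$ times. Each outer step recomputes the inner fixpoint from scratch. The total is $\mathcal{O}(n)\cdot\mathcal{O}(n)=\mathcal{O}(n^2)$ symbolic operations, plus $\mathcal{O}(1)$ equality checks per iteration to detect stabilization; I count these as symbolic operations too. The co\mbox{-}B\"uchi formula $\mu X.\nu Y.\cdots$ is handled dually: an increasing outer chain of length at most $n$, and inner decreasing chains of length at most $n$.

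\textbf{Step 2: strategy extraction.} This is the only part that needs care, since the lemma speaks only of winning regions. For B\"uchi, after the final outer iteration I keep the approximants $Y_0\subseteq Y_1\subseteq\dots$ of the last inner fixpoint, of which there are at most $n$. A system node $v$ first appearing in $Y_{i+1}$ gets its strategy as follows. If $v\in F\cap\mathsf{CPre}_s(X)$, the system moves to a successor in $X$. Otherwise it moves to a successor in $Y_i$.

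Computing these choices symbolically costs $\mathcal{O}(1)$ operations per layer: one takes $\mathsf{CPre}_s$ restricted to the relevant layer and intersects with the edge relation. The whole extraction is thus $\mathcal{O}(n)$. For co\mbox{-}B\"uchi one proceeds analogously, using the outer approximants as ranks and, inside each rank, a safety strategy that stays in $F$. Correctness of these strategies is the standard rank argument: along any compatible play the rank decreases until $F$ is visited, and this happens infinitely often because $X$ is a fixpoint. I will cite this argument rather than reprove it.

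Combining both steps gives symbolic time $\mathcal{O}(n^2)$ for both game types.
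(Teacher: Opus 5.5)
Your proposal is correct and follows the paper's reasoning. The paper reads the corollary off the preceding lemma together with its remark that a single extremal fixpoint costs $\mathcal{O}(|V|)$ symbolic operations, so two nested fixpoints cost $\mathcal{O}(n)\cdot\mathcal{O}(n)$, exactly as in your Step 1. Your strategy extraction in Step 2 goes beyond what the paper spells out and is fine, with one correction: for co-B\"uchi the rank along a compatible play is non-increasing, hence eventually constant, and every node at which it fails to drop lies in $F$; the B\"uchi-style ``rank decreases until $F$ is visited'' does not describe that case.
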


\paragraph{Alternating safety and reachability.}
We now consider Algorithm~\ref{alg:safeReach},
which solves DWA games without computing nested fixpoints.
Although the runtime remains quadratic, the algorithm decouples the least and greatest fixpoint computations used in (co-)B\"uchi solutions.
The algorithm maintains a growing sequence of winning regions $W_i$,
starting from $W_0=\emptyset$. In iteration $i$, it solves: 
\begin{itemize}
\item[--] a game over $F\cup W_{2i}$ with the objective to either stay in $F$ forever
or eventually reach $W_{2i}$, and
\item[--] a reachability game over $V$ with target set $W_{2i+1}$.
\end{itemize}
The corresponding fixpoints are
\begin{align*}
W_{2i+1}&=\text{Safe}(F,W_{2i}) & \text{(line 3)}\\
W_{2i+2}&=\text{Reach}(V,W_{2i+1})& \text{(line 4)}
\end{align*}
Thus, $\textsc{SolveSafeReach}(V,F)$ computes the set of nodes from which
the system player has a strategy to eventually stay within
an accepting SCC forever.

\begin{algorithm}[bt]
\caption{\label{alg:safeReach}$
\textsc{SolveSafeReach}($V$,$F$)$}
\SetKwFunction{Safe}{Safe}
\SetKwFunction{Reach}{Reach}
\DontPrintSemicolon
   $i=0$;\,$W_0=\emptyset$;\,$W_{-2}=V$\;
   \While{$W_{2i}\neq W_{2(i-1)}$}{
    $W_{2i+1}=\Safe(F,W_{2i})$\;
    $W_{2i+2}=\Reach(V,W_{2i+1})$\;
    $i=i+1$\;
   }
   \Return{$W$};

\end{algorithm}

We observe that $W_{2i}\subseteq W_{2i+1}\subseteq W_{2(i+1)}$ for all $i$.
Consequently, the fixpoint computations in later iterations 
operate on larger target sets ($W_{2i}$ and $W_{2i+1}$, respectively) and therefore
terminate more quickly.

\begin{restatable}{lemma}{safereach}
\label{lem:safereachsolution}
Algorithm~\ref{alg:safeReach} solves weak games with $n$ nodes, $k$ accepting nodes and $l$ SCCs 
in symbolic time $\mathcal{O}(n+kl)\in \mathcal{O}(n^2)$, and yields memoryless winning strategies for the system player.
\end{restatable}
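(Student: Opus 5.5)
The plan is to prove three things separately: soundness (every node of the returned set $W$ is won by the system with a memoryless strategy), completeness (every winning node ends up in $W$), and the symbolic-time bound. Throughout I use the monotonicity $W_{2i}\subseteq W_{2i+1}\subseteq W_{2i+2}$ observed above.

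\emph{Soundness and strategy extraction.} To each node $v\in W$ I assign the least index $j$ with $v\in W_j$. If $j=2i+1$, then $v\in F\cap\mathsf{CPre}_s(W_{2i+1})$ by the greatest-fixpoint property of $\text{Safe}(F,W_{2i})$. If $j=2i+2$, then $v$ additionally carries an attractor rank $r$ from the least fixpoint $\text{Reach}(V,W_{2i+1})$.

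The memoryless strategy is defined by cases on this index:
\begin{itemize}
\item[--] at a system node $v$ with $j=2i+1$, move to a successor in $W_{2i+1}$;
\item[--] at a system node $v$ with $j=2i+2$, move to a successor of strictly smaller attractor rank, or into $W_{2i+1}$.
\end{itemize}
Along any compatible play, the pair (index, attractor rank) never increases lexicographically. Reach moves strictly decrease it. Hence from some point on, the play stays within a single odd layer $W_{2i+1}\setminus W_{2i}$, and all nodes of that layer lie in $F$. The play therefore eventually stays in $F$ and is winning. This argument also produces the memoryless strategy claimed in the statement.

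\emph{Completeness.} I argue by induction on the height $h(S)$ of an SCC $S$ in the DAG of SCCs, where bottom SCCs have height $1$. The claim is that after $h$ iterations, $W_{2h}$ contains every winning node whose SCC has height at most $h$.

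Fix a winning node $v$ in an SCC $S$ of height $h+1$, and fix a positional winning strategy $\sigma$, which exists by positional determinacy of B\"uchi games. Every play compatible with $\sigma$ visits only winning nodes. Any exit from $S$ leads to a winning node of lower height, which lies in $W_{2h}$ by the induction hypothesis.
\begin{itemize}
\item[--] If $S$ is accepting, then $S\cap \text{Win}$ together with $W_{2h}$ is a post-fixpoint witnessing ``stay in $F$ or reach $W_{2h}$''. Hence $v\in W_{2h+1}$.
\item[--] If $S$ is rejecting, then $\sigma$ cannot keep the play in $S$ forever, so $\sigma$ forces leaving $S$ into winning lower nodes. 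Hence $v\in\text{Reach}(V,W_{2h})\subseteq W_{2h+2}$.
\end{itemize}
Consequently the loop stabilises after at most $l+1$ iterations, and at termination $W$ contains the whole winning region.

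\emph{Complexity.} There are at most $l+1$ outer iterations. Each call to $\text{Safe}(F,W_{2i})$ starts from $V$ and can only remove nodes of $F\setminus W_{2i}$, so it stabilises after $\mathcal{O}(k+1)$ steps; this gives $\mathcal{O}(kl+l)$ over all iterations. For the $\text{Reach}$ calls, every non-final step of a least-fixpoint iteration adds a new node to the monotonically growing $W$. Across all calls these steps therefore total $\mathcal{O}(n)$, plus $\mathcal{O}(1)$ per call. Since $l\le n$, the overall bound is $\mathcal{O}(n+kl)$, which lies in $\mathcal{O}(n^2)$.

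I expect the main obstacle to be the completeness step for accepting SCCs. There one must check that the winning strategy never steers into losing nodes outside $W_{2h}$. One must also check that the fixpoint for $\text{Safe}$, taken over all of $F$ rather than over $S$ alone, still captures $v$. Both follow from the post-fixpoint characterisation of $\nu$, applied to the set $(\text{Win}\cap F\cap S)\cup W_{2h}$.
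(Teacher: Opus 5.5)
Your proof is correct and follows the paper's route. It uses the same least-index strategy extraction and the same cost accounting: Reach steps are charged to newly added nodes, and there are at most $l+1$ Safe calls of $\mathcal{O}(k+1)$ steps each. Your lexicographic ranking for soundness and your SCC-height induction for completeness supply the details the paper dismisses as following ``directly from the construction''.

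One step you leave implicit is early termination. If the loop exits with $W_{2i}=W_{2i-2}$, the sequence is constant from then on, because each iteration depends on $W_{2i}$ alone. So the returned set still equals $W_{2h}$ for the maximal SCC height $h$, and hence contains the whole winning region.
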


\begin{proof} Correctness follows directly from the construction of the algorithm.
The total number of symbolic operations required to compute all sets $\text{Reach}(V,W_j)$ is
linear in $n$, since each game node is added to the winning region at most once.
Furthermore, at most $l$ computations of $\text{Safe}(F,W_j)$ are needed; each such computation
can be implemented in symbolic time $\mathcal{O}(k)$, where $k=|F|$.
A winning strategy is obtained by playing, at each game node $v$, according to the memoryless 
winning strategy for the game associated with $W_j$, where $j$ is the smallest index such that $v\in W_j$.
\end{proof}

\begin{algorithm}[bt]
\caption{\label{alg:weakSCC}
\textsc{SolveWeakSCC}($V$,$F$,$\mathrm{SCCs}$)}
\SetKwFunction{Safe}{Safe}
\SetKwFunction{Reach}{Reach}
\SetKwFunction{bottomSCCs}{bottomSCCs}
\DontPrintSemicolon
   $W=\emptyset$\;
   \While{$SCCs\ne \emptyset$}{
   $B=\bottomSCCs(SCCs)$\;
   $SCCs=SCCs\setminus B$\;
   \For{$SCC \in B$}{
    \eIf{$SCC\subseteq F$}{
      $X=\Safe($SCC, W$)$\;
    }(\tcp*[h]{$SCC\cap F=\emptyset$}){
      $X=\Reach($SCC, W$)$ 
    }
    $W=W\cup X$\;
    }
    }
   \Return{$W$}
\end{algorithm}
\paragraph{Linear solution of weak games.}
Algorithm~\ref{alg:safeReach} can be made linear
by parameterizing it with an SCC decomposition of the game
arena and restricting fixpoint computations to individual SCCs.
This results in Algorithm~\ref{alg:weakSCC}, which 
is similar to the algorithm proposed in~\cite{DBLP:conf/cav/AmramBFTVW21}. 
It takes a game together with its SCC decomposition as input and solves the
DAG of SCCs in a bottom-up fashion by repeatedly processing all bottom SCCs and removing them from the DAG.
The algorithm assumes a function
$\textsc{bottomSCCs}$ that returns the current minimal SCCs.
The winning region computed so far is stored in the set $W$.

Each bottom SCC is handled as follows.
If the SCC is accepting, the algorithm solves a game over
the union of the SCC and $W$ 
with the objective to either remain within the SCC forever or eventually
reach $W$. If the SCC is rejecting,
the objective is just to eventually reach $W$.

\begin{restatable}{lemma}{scc}
\label{lem:weakSCCsolution}
Algorithm~\ref{alg:weakSCC} solves weak games with $n$ nodes 
in symbolic time $\mathcal{O}(n)$ and yields memoryless winning strategies for the system player.
\end{restatable}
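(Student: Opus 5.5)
We argue correctness by induction over the order in which Algorithm~\ref{alg:weakSCC} processes SCCs, and bound the running time by charging every symbolic operation to a node of the SCC being processed. The invariant is: after each SCC $S$ is handled, $W$ is exactly the system player's winning region restricted to the union of all SCCs processed so far. This union is closed under successors: any successor of a node in a processed SCC lies in the same SCC or in an SCC below it, and all such SCCs were removed earlier. Hence the winning status of a node in $S$ depends only on $S$ and on the already processed SCCs, whose winning region is $W$ by the induction hypothesis.

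\textbf{Accepting SCC $S\subseteq F$.} A play from $S$ either stays in $S$ forever, which is winning since it then remains in $F$, or it leaves $S$. Once it leaves, it enters a processed SCC and never returns. By the induction hypothesis, the continuation is winnable exactly if it enters $W$.

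\begin{itemize}
\item[--] If the system ensures ``stay in $S$ or reach $W$'', it wins.
\item[--] Conversely, if the environment can force the play to leave $S$ into a losing processed node, then by positional determinacy of the lower game the environment wins from there.
\end{itemize}

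So the winning nodes in $S$ are exactly $\mathrm{Safe}(S,W)\cap S$. For this we read $\mathsf{CPre}_s$ as taken over the whole arena, with edges leaving $S\cup W$ treated as losing; this is the same as computing inside the subgame $S\cup W$.

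\textbf{Rejecting SCC $S\cap F=\emptyset$.} Staying in $S$ forever is losing, so the system must leave $S$ and reach $W$. Any exit to a processed SCC outside $W$ is losing. Hence the winning nodes in $S$ are $\mathrm{Reach}(S,W)\cap S$.

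\textbf{Strategies.} At a node $v\in S$ the system plays the memoryless strategy from the safety or reachability subgame of $S$. Since each node belongs to exactly one SCC, these local strategies combine into a single memoryless strategy. Correctness of the combined strategy follows from the same case split: a play either eventually stays in one SCC, where the local objective guarantees a win, or it descends the DAG finitely often. Each descent enters $W$, and winning nodes are used consistently along the way.

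\textbf{Running time.} This is the step I expect to be the main obstacle. The fixpoint for $S$ is iterated only while the current set changes within $S$, because $W$ is fixed during the computation. Thus $\mathrm{Safe}(S,W)$ and $\mathrm{Reach}(S,W)$ stabilise after at most $|S|+1$ iterations, each consisting of $\mathcal{O}(1)$ symbolic operations. To make this precise, I would restrict the iterates to $S$: start the greatest fixpoint at $S$ and intersect with $S$ at each step, then union with $W$ at the end. Each of the $\mathcal{O}(1)$-many bookkeeping operations per SCC, namely removal from the list and the union $W\cup X$, is charged to a node of that SCC. Since the SCCs partition the $n$ nodes, the total is $\sum_S \mathcal{O}(|S|+1)=\mathcal{O}(n)$ symbolic operations. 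This assumes that $\textsc{bottomSCCs}$ is given by the precomputed decomposition, whose DAG can be traversed in topological order at no additional symbolic cost.
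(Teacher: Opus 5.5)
Your proposal is correct and follows essentially the same route as the paper: a bottom-up induction over the SCC DAG (the paper inducts on the number of SCCs reachable from an environment-won node, you on the processing order), safety reasoning for accepting SCCs and reachability reasoning for rejecting ones, composition of the local memoryless strategies, and charging $\mathcal{O}(|S|)$ symbolic operations to each SCC $S$. The one difference is in the running time: the paper also counts the SCC decomposition itself, citing a linear-time symbolic decomposition algorithm, whereas you take the decomposition and its topological order for \textsc{bottomSCCs} as given input (the paper is likewise silent on the cost of \textsc{bottomSCCs}).
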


\begin{proof}{\emph{(Sketch)}}
The algorithm computes precisely the set of nodes for which the system player
has a strategy to eventually stay in some accepting SCC forever. 
Regarding time complexity, graphs with $n$ vertices can be decomposed into their SCCs in symbolic time 
$\mathcal{O}(n)$~\cite{DBLP:conf/tacas/LarsenSSJPP23}.
For each strongly connected component $SCC$, Algorithm~\ref{alg:weakSCC}
computes exactly one of the sets
$\text{Reach($SCC,W$)}$ or $\text{Safe($SCC,W$)}$, each of which
can be obtained in symbolic time $\mathcal{O}(|SCC|)$. Since the sizes of all SCCs 
sum to $n$, the overall runtime is linear.
A memoryless strategy is constructed
by always following the memoryless safety or reachability strategy for the current SCC. 
\end{proof}

\section{Implementation}\label{sec:impl}
We implemented all the algorithms mentioned above by extending the \LTLfp synthesis tool \tool~\cite{10.24963/kr.2025/78}.

We first describe the implementation of the game arena construction and then detail the implementation of the game-solving algorithms.
Given an obligation \LTLfp formula $\Psi$, each finite-trace component $\mathbb{Q}\Phi$ is converted into
a DWA by translating the \LTLf formula $\Phi$ into a DFA using MONA's explicit
representation~\cite{DBLP:journals/ijfcs/KlarlundMS02}, which then is treated as a DWA.
Subsequently, the composition constructions described in~\Cref{sec:obltodwa} are applied along the Boolean structure of $\Psi$.
We implemented DWA minimization in an explicit representation, applying L\"oding's pre-processing step~\cite{DBLP:journals/ipl/Loding01}, followed by DFA minimization using MONA.
We consider two modes for obtaining the automata:
\begin{itemize}
    \item \textbf{Component-wise minimization + symbolic product}: Each component DWA is minimized individually,
    after which the product is constructed symbolically. This approach results in $n$ minimization calls for $n$ components.
    \item \textbf{Incremental explicit products + threshold-based switching:} Intermediate products are kept explicit and are minimized as long as the number of states remains below a fixed threshold $\tau$ (we use $\tau = 256$).
    Once the threshold is exceeded, we switch to a symbolic representation for subsequent products. To further control
    the sizes of product automata, we compute products in a balanced manner.
\end{itemize}
At the end of the arena construction, both variants yield symbolic games played over DWA. These games are solved as follows.

\paragraph{Direct fixpoint computation algorithms.} We implemented the classical fixpoint computations
for the symbolic solution of B\"uchi and co-B\"uchi games, as we well as the novel Algorithm~\ref{alg:safeReach}
(referred to as ``SafeReach'' below). These algorithms integrate seamlessly into the \tool infrastructure, as
they compute fixpoints directly over the BDD representation of the game arena. 

\paragraph{SCC-based solution algorithm.} Although Algorithm~\ref{alg:weakSCC} is theoretically more efficient, 
implementing linear-time SCC decomposition is not straightforward with the automaton representation used by \tool.
Existing linear-time algorithms assume a monolithic, fully symbolic encoding of the transition relation with primed successor variables (i.e., transitions encoded as a function $T(x, x')$). In contrast, \tool employs a partitioned encoding that facilitates pre-image computation but makes post-image computation more difficult.
In our implementation of SCC decomposition, we follow the established FwdBwd approach, using an implementation similar to~\cite{DBLP:conf/cav/AmramBFTVW21}.
Specifically, we first convert the transition relation to the primed encoding and then compute the path relation required for SCC decomposition via transitive closure.
We also considered implementing the fully symbolic Chain algorithm~\cite{DBLP:conf/tacas/LarsenSSJPP23}. However, this algorithm optimizes only the SCC computation itself. This does not address the overhead of constructing the monolithic transition relation.
Since this construction constitutes the dominant bottleneck in our setting, we do not expect the Chain algorithm to significantly improve scalability.

\begin{figure}[t]
    \centering
    \includegraphics[width=\linewidth]{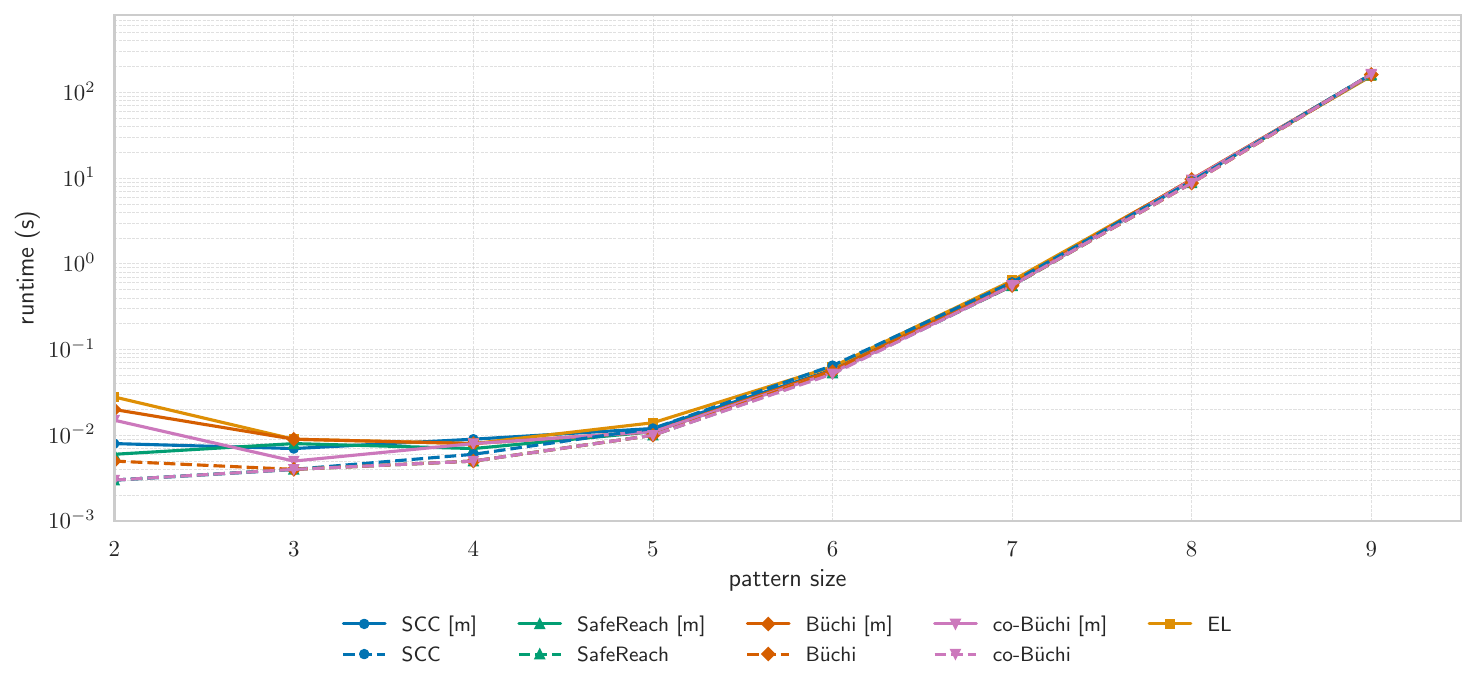}
    \caption{Runtime of solvers on the counter pattern $\psi$}
    \label{fig:counter}
\end{figure}

\begin{figure*}[t]
    \centering

    \begin{subfigure}{0.48\textwidth}        \includegraphics[width=\linewidth]{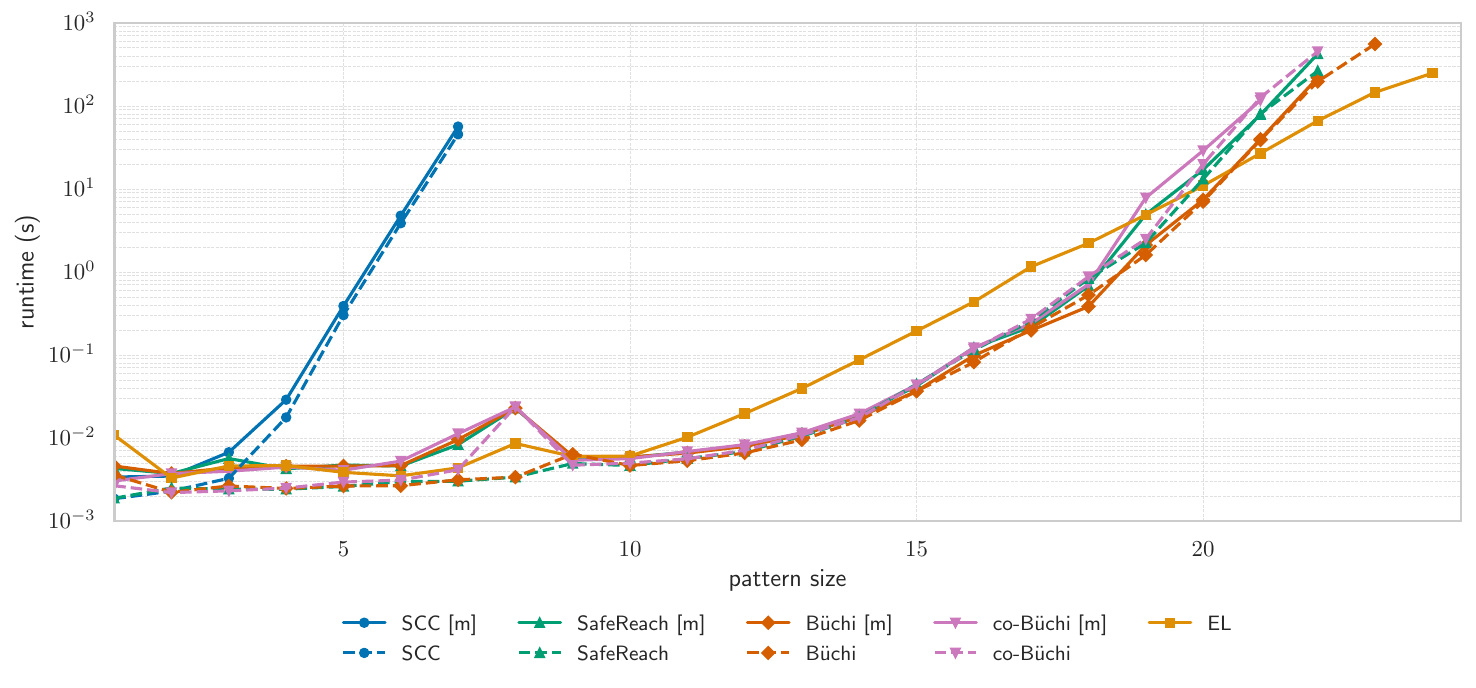}
        \caption{$\textstyle\bigwedge_i{\exists_i}\land\forall$}
        \label{fig:aAe}
    \end{subfigure}\hfill
    \begin{subfigure}{0.48\textwidth}
        \includegraphics[width=\linewidth]{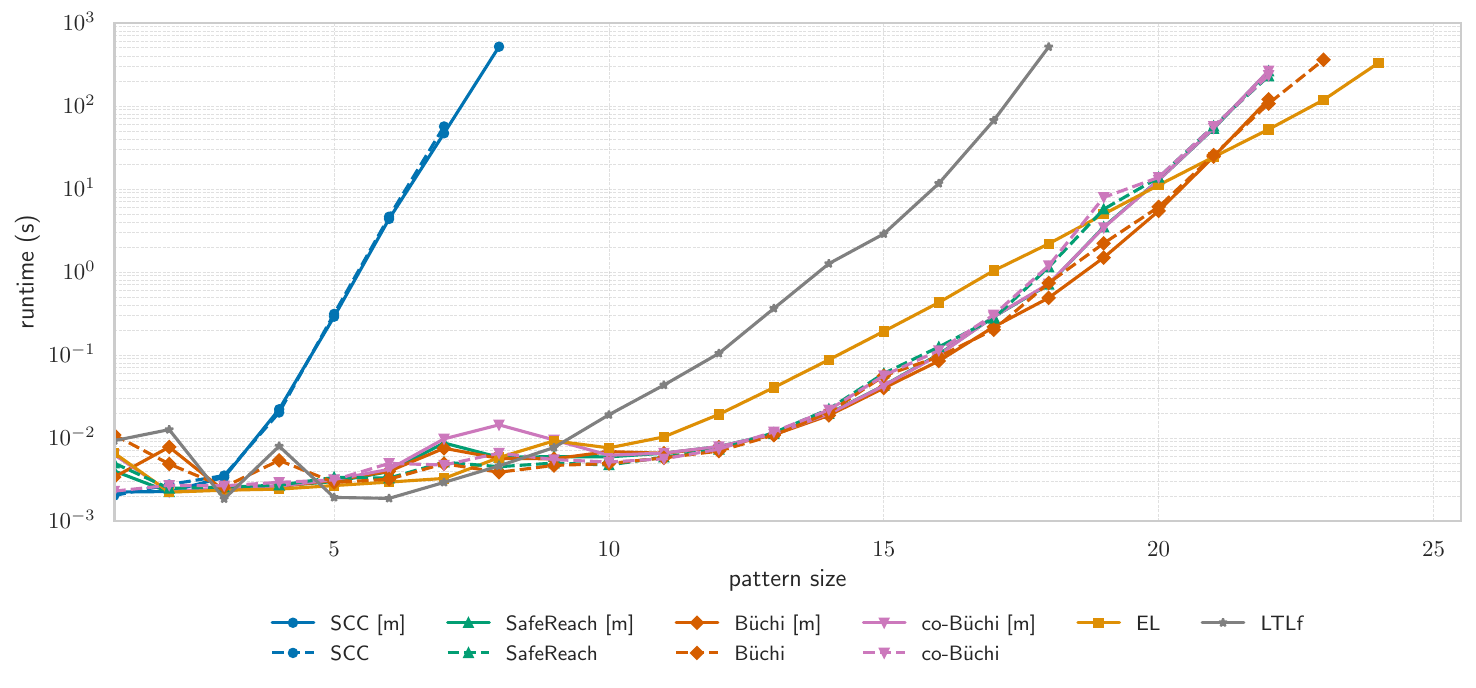}
        \caption{$\textstyle\bigwedge_i{\exists_i}\land\exists$}
        \label{fig:eAe}
    \end{subfigure}

    \medskip

    \begin{subfigure}{0.48\textwidth}
        \includegraphics[width=\linewidth]{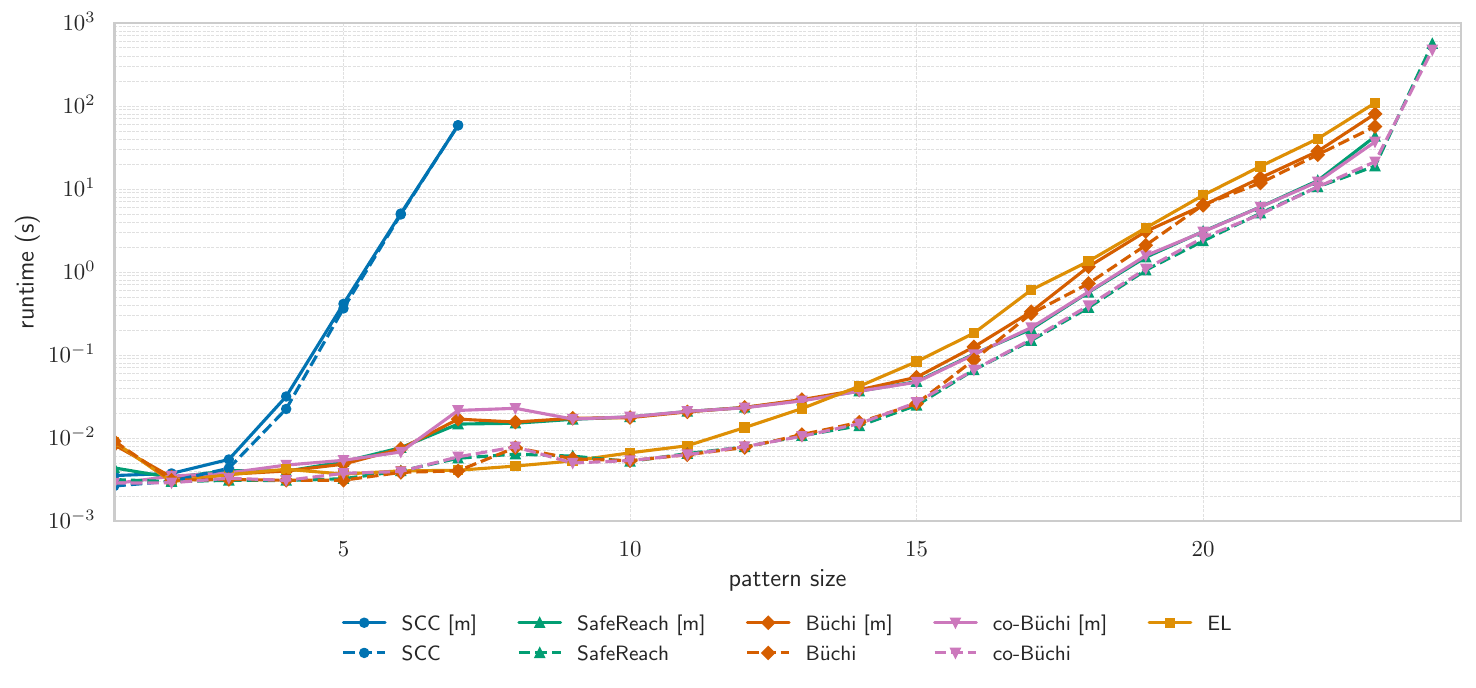}
        \caption{$\textstyle\bigvee_i{\forall_i}\lor\forall$}
        \label{fig:aOa}
    \end{subfigure}\hfill
    \begin{subfigure}{0.48\textwidth}
        \includegraphics[width=\linewidth]{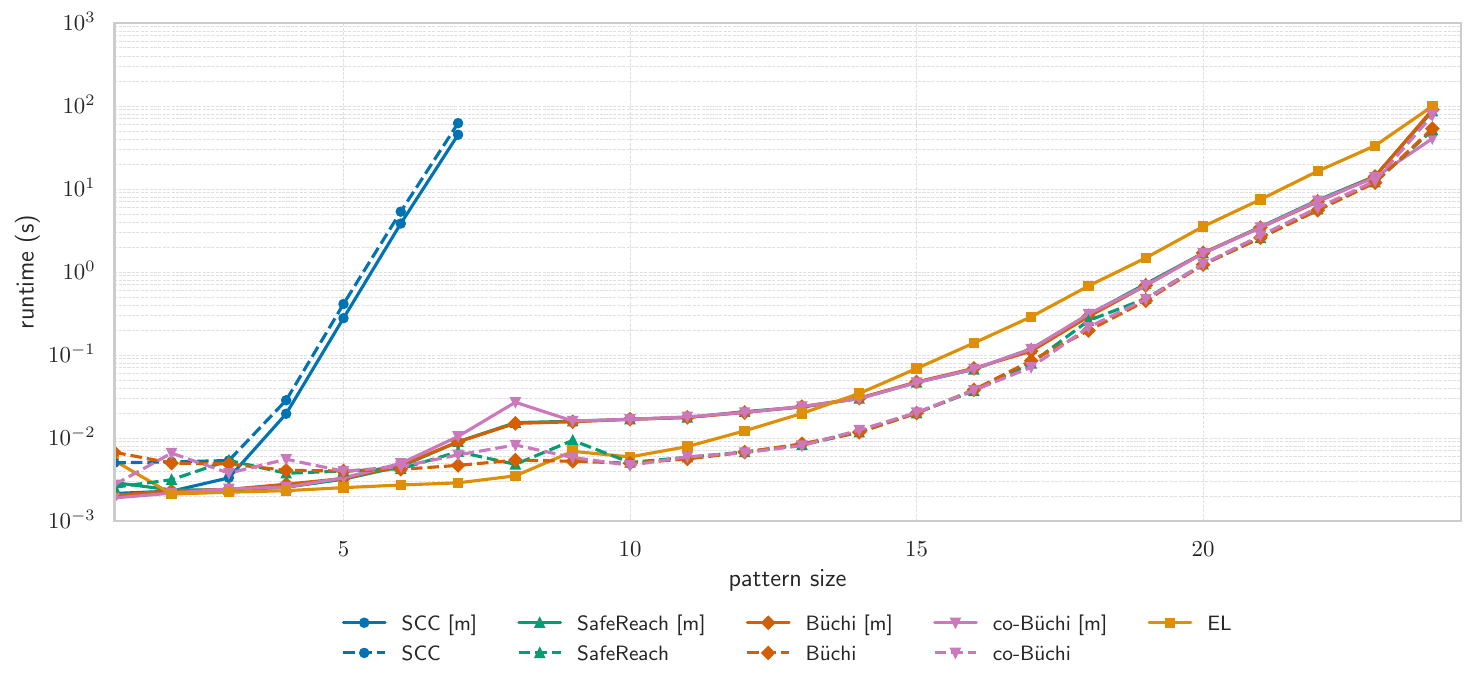}
        \caption{$\textstyle\bigvee_i{\forall_i}\lor\exists$}
        \label{fig:aOe}
    \end{subfigure}

    \caption{Runtime of solvers for conjunction and disjunction patterns}
    \label{fig:patterns}
\end{figure*}

\section{Experiments}\label{sec:bench}
We conduct experiments on a range of formula patterns to study the scalability of the proposed algorithms. The evaluation is intended as a proof of concept and therefore does not rely on extensive computational resources. Overall, our results demonstrate that the obligation fragment can be solved with effectivity comparable to standard \LTLf synthesis. All experiments were carried out on an M4 MacBook Air with 16GB RAM of which 12GB were allocated to the virtual experiments, using a time limit of 10 minutes per benchmark. 

When discussing experiment results, we refer to the four synthesis algorithms detailed above by
``B\"uchi'', ``co-B\"uchi'', ``SafeReach'', and ``SCC''. Versions that employ incremental minimization
are marked with ``[m]'' and plotted using solid lines (dashed lines are used to plot the runtime for algorithms that build the component-wise
symbolic product of minimized automata). 
We additionally compare our methods with the \tool implementation of synthesis for full \LTLfp based on a reduction to Emerson-Lei games
(denoted ``EL'' in the plots); this algorithm does not support incremental minimization.
Finally, we compare against the \tool implementation of \LTLf synthesis (denoted by ``LTLf''), where possible. To ensure a fair comparison, we disabled one-step realizability checks using Z3, as these trivially solve the instance.

We begin by adapting the counter benchmark commonly used in \LTLf{}
and \LTLfp synthesis, which specifies a binary counter with $n$ bits $b_0,\ldots, b_{n-1}$ with the overall
system objective to reach the maximal counter value, expressed as a guarantee property. 
The original \LTLfp formulation~\cite{10.24963/kr.2025/78} does not belong to the obligation fragment, 
as it employs a recurrence property to require that the counter is incremented
infinitely often. We modify this specification by incorporating the increment requirement
into the \LTLf formula for the domain. More specifically, we require that the counter is incremented at every step ($\varphi_{alw}$), resulting in the following realizable specification in which the system controls the variable $add$:
\begin{align*} \scriptsize
    \psi &= \forall (\varphi_{init} \wedge \varphi_{inc} \wedge \varphi_{alw} \wedge \varphi_{trans}) \supset \exists \varphi_{goal} \\
    \varphi_{init} &= \lnot c_0 \wedge \dots \wedge \lnot c_{n-1} \wedge \lnot b_0 \wedge \dots \wedge \lnot b_{n-1} \\
    \varphi_{inc} &= \mathsf{G}(add\supset (\mathsf{X}(c_0) \wedge \mathsf{X}(\mathsf{X}(c_0)) \wedge \mathsf{X}(\mathsf{X}(\mathsf{X}(c_0))))) \\
    \varphi_{alw} &= \mathsf{G}\mathsf{F}(add \wedge \mathsf{X}\,\mathsf{false}) \\
    \varphi_{trans} &= \mathsf{G} \left\{\begin{aligned}
((\lnot c_i \land \lnot b_i) &\to \mathsf{X}(\lnot b_i \land \lnot c_{i+1})) \wedge \\
((\lnot c_i \land  b_i) &\to \mathsf{X}(b_i \land \lnot c_{i+1})) \wedge \\
((c_i \land \lnot b_i) &\to \mathsf{X}(b_i \land \lnot  c_{i+1})) \wedge \\
((c_i \land  b_i) &\to \mathsf{X}(\lnot b_i \land c_{i+1}))
\end{aligned}\right. \\
\varphi_{goal} &= \mathsf{F}\left( b_0 \land \dots \land b_{n-1} \wedge \mathsf{X}\,\mathsf{false} \right)
\end{align*}
\Cref{fig:counter} shows the runtimes of the various implementations
on this benchmark series.
In this case, DFA construction accounts for the majority of the runtime, while solving the resulting game is comparatively fast.
Consequently, the overall runtime is dominated by the construction of the \LTLf component automata, 
leading to nearly identical performance across all algorithms. 
Minimization does not significantly reduce the number of states in this benchmark. This appears to be due primarily to the minimization threshold being set to 256: once the automaton for the \LTLf domain exceeds this size, minimization is no longer performed. 
Increasing the DWA minimization threshold is unlikely to yield significant performance gains, as DFA construction already constitutes the principal cost in this example.

To evaluate synthesis for basic Boolean combinations of finite-trace components, we consider several formula patterns obtained by combining \LTLf formulas of the form $\varphi_i=\mathsf{F}( (e_i \vee a_i) \wedge \mathsf{X}\,\mathsf{false})$ with different boolean operators and trace quantifiers. The environment controls variables $e_i$ and the system
controls variables $a_i$. In consequence, all formulas in this experiment are realizable. We omit patterns in which the number of finite-trace components can be reduced using \Cref{lem:simp}. 
\[
\scriptsize
\begin{aligned}
\textstyle\bigwedge_i{\exists_i}\wedge\forall
  &:\qquad \exists\varphi_1 \land \dots \land \exists\varphi_{n-1} \land \forall\varphi_n \\
\textstyle\bigwedge_i{\exists_i}\wedge\exists
  &:\qquad \exists\varphi_1 \land \dots \land \exists\varphi_n \\
\textstyle\bigvee_i{\forall_i}\vee\forall
  &:\qquad \forall\varphi_1 \vee \dots \vee \forall\varphi_n \\
\textstyle\bigvee_i{\forall_i}\vee\exists
  &:\qquad \forall\varphi_1 \vee \dots \vee \forall\varphi_{n-1} \vee \exists\varphi_n
\end{aligned}
\]
The ${\exists}$ pattern introduced in \cite{10.24963/kr.2025/78} corresponds to the $\textstyle\bigwedge_i{\exists_i}\wedge\exists$ pattern. We refer to the number of conjuncts or disjuncts as the pattern size.

\Cref{fig:patterns} shows the runtime of the algorithms on these benchmarks for growing pattern sizes.

In most of these examples, we observe a clear separation between the SCC, EL, and B\"uchi-based algorithms, both in terms of runtime and the number of instances solved. 
With the exception of the SCC-based implementation, our new methods perform similar to
the existing algorithms. 
We attribute the comparatively good performance of the EL-based solver for 
experiments on conjunctions of guarantuee properties (Figures~\ref{fig:patterns}(a) and~\ref{fig:patterns}(b)) to
the fact that the EL solver constructs and solves generalized B\"uchi games in these experiments. Using this approach, the EL solver decomposes the overall objective into several subobjectives that can be solved independently. An overall solution is obtained by composing the solutions for the subobjectives. This apparently can lead to advantages over the B\"uchi-based algorithms which solve games with a single monolithic objective, which is simpler but may require more iterations until a fixpoint is obtained.

Because the patterns in this experiment are designed to be irreducible under the simplification rules from Lemma~\ref{lem:simp}, it is not surprising that minimization in most cases does not substantially reduce the number of states. The only exception is the $\textstyle\bigwedge_i{\exists_i}\wedge\forall$ pattern, where minimization decreases the state count by up to 50\%, but does not reduce the bit count of the BDD representation.

We attribute the comparatively weak performance of the SCC-based algorithm primarily to implementation effects, in particular the automata representation used in \tool, which is not well suited for efficient SCC computation.
In instances where the SCC solver fails to produce a result within the allotted time, the construction of the monolithic transition relation typically fails because the BDD grows too large. 
Notably, minimization has mixed effects on the SCC algorithm: in some cases (e.g., the $\textstyle\bigvee{\forall_i}\vee\forall$ pattern) it improves performance, while in others (e.g., the $\textstyle\bigwedge_i{\exists_i}\wedge\exists$ pattern) it degrades it. This likely reflects a trade-off between the reduced state space, which can improve efficiency, and the loss of the partitioned symbolic structure of the automata caused by minimization, which can make subsequent operations more expensive.

The $\textstyle\bigwedge_i{\exists_i}\wedge\exists$ pattern is equi-realizable with the \LTLf formula $\bigwedge_{1 \leq i \leq n} (a_i \vee e_i)$, enabling a direct comparison with \LTLf synthesis tools. In \Cref{fig:eAe} we plot the runtime of the \LTLf synthesizer included in \tool. 
We observe that both the EL-based approach and our novel algorithms require less time and solve more instances. 
We believe that the improved performance of our implementations in this benchmark stems from their compositional treatment of the automata. 
To validate this hypothesis, we conduct additional experiments using our solvers on formulas $\exists (\bigwedge_{1 \leq i \leq n} \mathsf{F((a_i \vee e_i) \wedge \mathsf{X}~\mathsf{false})})$. 
This results in performance comparable to standard \LTLf synthesis, suggesting that a compositional approach to \LTLf synthesis ~\cite{bansal2020hybrid} could achieve similar results on these benchmarks.

Overall, the results in \Cref{fig:eAe} are particularly significant, as they show that synthesis for the \LTLfp{} obligation fragment, despite operating over infinite traces, achieves performance comparable to that of \LTLf synthesis on finite traces.

Finally, we construct a family of benchmark formulas with a more complex Boolean structure.
We consider the implication pattern
$\psi_j = \textstyle \bigwedge_{i \leq j} ((\exists \mathsf{F}\,a_i) \supset \exists (\mathsf{F}\,e_i))$
 which states
that for every $i\leq j$, whenever $a_i$ holds at some point along a trace, $e_i$ must also hold at some point 
(possibly even before $a_i$). The environment controls the variables $a_i$, while the system controls the variables $e_i$,
making the overall specification realizable.
Figure~\ref{fig:impl} shows the runtimes for this experiment. In this setting, the SCC algorithm benefits
from minimization, which reduces the BDD representation by four bits for the larger instances and by
two to three bits for pattern sizes up to 14. Nevertheless, its performance remains inferior to that of the EL algorithm, which itself is outperformed significantly by our B\"uchi-based approaches. 
For the B\"uchi algorithms, minimization appears to have only limited impact. Although the arena size reduces by roughly 99\% on the larger instances, this results in a reduction of only about 12\% in the BDD bit count, which likely explains the modest effect on overall runtime.

\begin{figure}
    \centering
    \includegraphics[width=\linewidth]{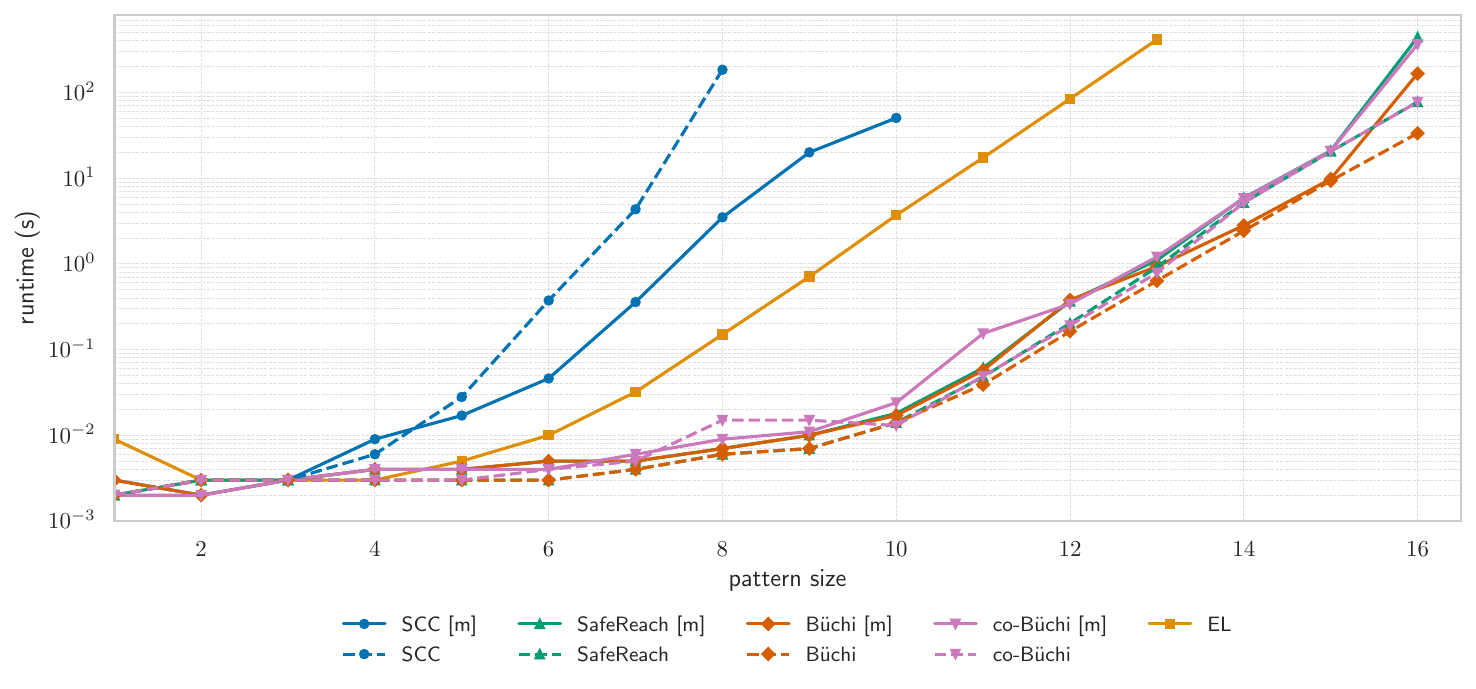}
    \caption{Runtime of solvers on the implication pattern}
    \label{fig:impl}
\end{figure}

\section{Discussion}

We show that synthesis for obligation properties in \LTLfp reduces to solving games over automata on infinite words with weak acceptance conditions (DWA). Leveraging the simplicity of DWA automata and games, we propose reactive synthesis algorithms with the same asymptotic complexity as \LTLf synthesis. We implement and evaluate these algorithms on benchmark formulas, comparing them to each other, to a full \LTLfp solver based on reduction to Emerson–Lei games, and, where applicable, to an \LTLf synthesis tool. Results show performance comparable to \LTLf synthesis, extending effective synthesis from finite traces to infinite-trace obligation properties.

For future work, we plan to implement the SCC-based algorithm using an automaton representation akin to SPOT’s MTBDD-based approach~\cite{DuretLutzZPGV25}. We also conjecture that our efficient methods can be adapted to handle recurrence and persistence properties, but leave this for future investigation.

\bibliographystyle{kr}
\bibliography{lib,gdg-references}

\clearpage

\section*{Supplement to ``Preliminaries''}

\paragraph{Syntax and Semantics of \LTL.}
Linear-time Temporal Logic (\LTL)~\cite{Pnueli77} allows to express temporal properties of infinite traces.
The set of \LTL formulas over the set $AP$ of atomic propositions is 
given by the following grammar.
\begin{align*}
\varphi,\psi::= p \mid \neg\varphi\mid \varphi\land\psi\mid \mathsf{X} \varphi \mid \varphi\mathsf{U}\psi \tag{$p\in AP$}
\end{align*}
We make use of common abbreviations such as $\varphi\lor\psi=\neg(\neg \varphi\land\neg\psi)$, $\mathsf{true}=p\lor \neg p$, $\mathsf{false}=\neg\mathsf{true}$, $\mathsf{F}\varphi=\mathsf{true}\mathsf{U}\varphi$ (``eventually'') and
$\mathsf{G}\varphi=\neg \mathsf{F}\neg\varphi$ (``always'').
\LTL formulas are evaluated over infinite traces $\tau\in (2^{AP})^\omega$ of sets of atomic propositions. 
Satisfaction of \LTL formulas by infinite traces is defined inductively as follows, referring to the $i$th element in $\tau$ by $\tau_i$.
\begin{align*}
\tau,i&\models p &\text{ iff } &p\in \tau_i\\
\tau,i&\models \neg\varphi &\text{ iff } &\tau,i\not\models\varphi\\
\tau,i&\models \varphi\land\psi &\text{ iff } &\tau,i\models \varphi \text{ and } \tau,i\models \psi\\
\tau,i&\models \mathsf{X}\varphi &\text{ iff } &\tau,i+1\models\varphi\\
\tau,i&\models \varphi\mathsf{U}\psi &\text{ iff } &\exists j\geq i.\,\tau,j\models \psi \text{ and } \\
&&&\forall i\leq j'<j.\tau,j'\models \varphi.
\end{align*}
Given an \LTL formula $\varphi$, we let 
\[
[\varphi]=\{\tau\in (2^{AP})^\omega\mid \tau,0\models\varphi\}
\]
denote
the set of infinite traces that satisfy $\varphi$ at the start.

\paragraph{Syntax and Semantics of \LTLf.}
The syntax of \LTL on finite traces (\LTLf)~\cite{DegVa13} is the same as the syntax of \LTL given above. However, \LTLf formulas are evaluated
over finite traces $\tau\in (2^{AP})^*$ rather than over infinite ones. A common abbreviation is the operator $\mathsf{X}\varphi=\neg \mathsf{X}[!]\neg\varphi$ (``weak next''),
expressing that if there is a next position in the trace, then it satisfies $\varphi$.
The satisfaction of \LTLf formulas by finite traces is defined inductively
as follows, where $|\tau|$ denotes the length of a finite trace $\tau$.
\begin{align*}
\tau,i&\models p &\text{ iff } &p\in \tau_i\\
\tau,i&\models \neg\varphi &\text{ iff } &\tau,i\not\models\varphi\\
\tau,i&\models \varphi\land\psi &\text{ iff } &\tau,i\models \varphi \text{ and } \tau,i\models \psi\\
\tau,i&\models \mathsf{X[!]}\varphi &\text{ iff } &i+1< |\tau|\text{ and }\tau,i+1\models\varphi\\
\tau,i&\models \varphi\mathsf{U}\psi &\text{ iff } &\exists i\leq j<|\tau|.\,\tau,j\models \psi \text{ and } \\
&&&\forall i\leq j'<j.\tau,j'\models \varphi.
\end{align*}
The subtle (but consequential) difference to standard \LTL semantics is the requirement that $i+1<|\tau|$ (resp. $j<|\tau|$) in the last two clauses;
that is, for all $\varphi$, $\mathsf{X[!]}\varphi$ is not satisfied at the end of a finite trace, and in
order for $\varphi\mathsf{U}\psi$ to be satisfied in a finite trace, $\psi$ is required to be satisfied before the trace ends.
The formula $\mathsf{last} = \neg \mathsf{X[!]}\,\mathsf{true}$ is satisfied exactly at the last position of a finite trace.

Given an \LTLf formula $\varphi$, we let 
\[
[\varphi]=\{\tau\in (2^{AP})^*\mid \tau,0\models\varphi\}
\]
denote the set of \emph{finite} traces that satisfy $\varphi$ at the start.

\paragraph{Syntax and Semantics of \PPLTL.}

Pure past LTL (\PPLTL)~\cite{DeGiacomoSFR20} allows to express temporal properties of finite traces
by means of statements that refer to the past.
The set of \PPLTL formulas over the set $AP$ of atomic propositions is 
given by the following grammar.
\begin{align*}
\varphi,\psi::= p \mid \neg\varphi\mid \varphi\land\psi\mid \mathsf{Y[!]} \varphi \mid \varphi\mathsf{S}\psi \tag{$p\in AP$}
\end{align*}
Here, $\mathsf{Y[!]}$ (``yesterday'') and $\mathsf{S}$ (``since'') are the past operators; they
are past analogues of the temporal operators $\mathsf{X[!]}$ and $\mathsf{U}$.
Common abbreviations include $\mathsf{O}\varphi=\mathsf{true}\,\mathsf{S}\,\varphi$ (``at least once in the past'') and
$\mathsf{H}\varphi=\neg \mathsf{O}\neg\varphi$ (``historically''). 
The satisfaction of \PPLTL formulas by finite traces is defined inductively
as follows.
\begin{align*}
\tau,i&\models p &\text{ iff } &p\in \tau_i\\
\tau,i&\models \neg\varphi &\text{ iff } &\tau,i\not\models\varphi\\
\tau,i&\models \varphi\land\psi &\text{ iff } &\tau,i\models \varphi \text{ and } \tau,i\models \psi\\
\tau,i&\models \mathsf{Y[!]}\varphi &\text{ iff } &i>0\text{ and }\tau,i-1\models\varphi\\
\tau,i&\models \varphi\,\mathsf{S}\,\psi &\text{ iff } &\exists i\geq j\geq 0.\,\tau,j\models \psi \text{ and } \\
&&&\forall i\geq j'>j.\tau,j'\models \varphi.
\end{align*}
We observe that for all $\varphi$, $\mathsf{Y[!]}\varphi$ is not satisfied at the start of a finite trace, and in
order for $\varphi\,\mathsf{S}\,\psi$ to be satisfied at position $i$ in a finite trace, $\psi$ has to be satisfied somewhere in the first $i+1$ positions
of the trace.
The formula $\mathsf{first} = \mathsf{Y}\,\mathsf{false}$ (where $\mathsf{Y}=\neg \mathsf{Y}\mathsf[!]\neg$ denotes~``weak yesterday") is satisfied exactly at the first position of a finite trace.

Given a \PPLTL formula $\varphi$, we let 
\[
[\varphi]=\{\tau\in (2^{AP})^*\mid \tau,|\tau|-1\models\varphi\}
\]
denote the set of \emph{finite} traces that satisfy $\varphi$ at the \emph{end}, that is, at position $|\tau|-1$.

\section*{Supplement to ``Specification of Obligation Properties''}

\paragraph{Proof of Lemma~\ref{lem:simp}, restated for convenience.}

\simp*

\begin{proof} We have
\begin{align*}
[\forall \Phi \land \forall \Phi'] & = 
[\forall \Phi] \cap [\forall \Phi']
= \forall [\Phi] \cap \forall [\Phi']\\
&= \forall ([\Phi]\cap [\Phi'])
= \forall[\Phi\land \Phi']
= [\forall (\Phi\land \Phi')],
\end{align*}
where the third equality holds since, given an infinite trace $\tau$
and sets $T,T'$ of finite traces, 
all prefixes of $\tau$ are contained in $T$ and 
all prefixes of $\tau$ are contained in $T'$ iff
all prefixes of $\tau$ are contained in $T$ and in $T'$.
The proof of $\exists \Phi \lor \exists \Phi' \equiv  \exists (\Phi\lor \Phi')$
is analogous.

To see that the \LTLf formula $\Phi$ and the \LTLfp formula $\exists\Phi$ are equi-realizable,
let $\pi$ be a finite trace that satisfies $\Phi$. Then for any infinite extension $\tau$ of $\pi$,
we have that $\pi$ is a prefix of $\tau$ so that $\tau$ satisfies $\exists\Phi$.
For the converse direction, let $\tau$ be an infinite trace that satisfies $\exists\Phi$. Then
there is a prefix $\pi$ of $\tau$ that is contained in $[\Phi]$. Hence $\pi$ satisfies $\Phi$.
\end{proof}

\section*{Supplement to ``From Obligations to DWA''}

\paragraph{Full proof of Lemma~\ref{lem:weak_closure}, restated for convenience.}

\closure*

\begin{proof}
Let $\mathcal{A}_1=(T_1,F_1)$ and $\mathcal{A}_2=(T_2,F_2)$ be weak automata with transition systems 
$T_1=(\Sigma,Q_1,I_1,\delta_1)$ and $T_2=(\Sigma,Q_2,I_2,\delta_2)$.
We claim that:
\begin{itemize}
\item[--] $\mathcal{A}_{(1\land 2)} = (T_1\otimes T_2,F_1\times F_2)$ is s.t.\ $L(\mathcal{A}_{(1\land 2)})=L(\mathcal{A}_1)\cap L(\mathcal{A}_2)$;
\item[--] $\mathcal{A}_{(1\lor 2)} =(T_1\otimes T_2,F_1\times Q_2\cup Q_1\times F_2)$ is s.t.\ $L(\mathcal{A}_{(1\lor 2)}) =L(\mathcal{A}_1)\cup L(\mathcal{A}_2)$;
\item[--] $\mathcal{A}_{(\lnot 1)} = (T_1,Q_1\setminus F_1)$ is s.t.\ $L(\mathcal{A}_{(\lnot 1)})=\Sigma^\omega\setminus L(\mathcal{A}_1)$.
\item[--]$\mathcal{A}_{(1\land 2)}$, $\mathcal{A}_{(1\lor 2)}$, and $\mathcal{A}_{(\lnot 1)}$ are weak automata.
\end{itemize}
Let $w\in\Sigma^\omega$ be a word and let $\pi_1=q_1 q_2 \ldots$, $\pi_2=q'_1 q'_2\ldots$ and $\pi_1\times\pi_2=
(q_1,q'_1)(q_2,q'_2)\ldots$ be the runs
of $T_1$, $T_2$ and $T_1\otimes T_2$ on $w$, respectively.
 
 The first claim follows since
 $\mathsf{Inf}(\pi_1\times \pi_2)\cap F_1\times F_2\neq \emptyset$ if and only if
$\mathsf{Inf}(\pi_1)\cap F_1\neq \emptyset$ and $\mathsf{Inf}(\pi_2)\cap F_2\neq \emptyset$.
For the second claim, we point out that $\mathsf{Inf}(\pi_1\times \pi_2)\cap (F_1\times Q_2\cup Q_1\times F_2)\neq \emptyset$ if and only if
$\mathsf{Inf}(\pi_1)\cap F_1\neq \emptyset$ or $\mathsf{Inf}(\pi_2)\cap F_2\neq \emptyset$.
For the third claim, we have $\mathsf{Inf}(\pi_1)\cap (Q_1\setminus F_1)\neq \emptyset$ if and only if
$\mathsf{Inf}(\pi_1)\cap F_1=\emptyset$ by weakness of $\mathcal{A}_1$, that is, since every strongly connected component
is either fully accepting or fully rejecting so that either $\mathsf{Inf}(\pi_1)\subseteq F_1$ or
$\mathsf{Inf}(\pi_1)\subseteq Q_1\setminus F_1$.

It remains to show that $(T_1\otimes T_2,F_1\times F_2)$, $(T_1\otimes T_2,F_1\times Q_2\cup Q_1\times F_2)$ and $(T_1,Q_1\setminus F_1)$ are
weak automata. This is obvious for $(T_1,Q_1\setminus F_1)$. 

For the other two cases, we make the following observation about strongly
connected components in the product transition system $T_1\otimes T_2$. Let $(q_1,q_2)$ and $(q'_1,q'_2)$ be two states that belong to the same
strongly connected component in $T_1\otimes T_2$, that is, let there be a loop through $(q'_1,q'_2)$ and $(q_1,q_2)$. 
Then it follows from the definition of $T_1\otimes T_2$ that
there is a loop through $q_1$ and $q'_1$ in $T_1$ and a loop through $q_2$ and $q'_2$ in $T_2$.
In other words, $q_1$ and $q'_1$ belong to the same strongly connected component in $T_1$,
and $q_2$ and $q'_2$ belong the same strongly connected component in $T_2$.
By weakness of $T_1$ and $T_2$, we then have $q_1\in F_1$ iff $q'_1\in F_1$
and $q_2\in F_2$ iff $q'_2\in F_2$.

To see that $(T_1\otimes T_2,F_1\times F_2)$ is a weak automaton, consider two states $(q_1,q_2)$ and $(q'_1,q'_2)$ that belong to the
same strongly connected component in $T_1\otimes T_2$. From the above argumentation, we have $(q_1,q_2)\in F_1\times F_2$ iff
$q_1\in F_1$ and $q_2\in F_2$ iff $q'_1\in F_1$ and $q'_2\in F_2$ iff 
$(q'_1,q'_2)\in F_1\times F_2$, as required.

To see that $(T_1\otimes T_2,F_1\times Q_2\cup Q_1\times F_2)$ is a weak automaton, consider two states $(q_1,q_2)$ and $(q'_1,q'_2)$ that belong to the
same strongly connected component in $T_1\otimes T_2$. Again, we have $(q_1,q_2)\in F_1\times Q_2\cup Q_1\times F_2$ iff
$q_1\in F_1$ or $q_2\in F_2$ iff $q'_1\in F_1$ or $q'_2\in F_2$ iff 
$(q'_1,q'_2)\in F_1\times Q_2\cup Q_1\times F_2$, as required.
\end{proof}

\paragraph{Full proof of Lemma~\ref{lem:ftc_to_automaton}, restated for convenience.}

\ftctoaut*

\begin{proof}
\begin{itemize}

\item[--] $\mathbb{Q}=\exists$:
Let $\tau\in (2^{AP})^\omega$ be an infinite trace. Then
we have $\tau\in[\exists\Phi]$ iff there is some finite prefix of
$\tau$ that is contained in $[\Phi]$. This in turn is (by equivalence of $\Phi$ and $\mathcal{A}_\Phi$) the case
iff there is some finite prefix of
$\tau$ that is contained in $L(\mathcal{A}_\Phi)$, which is the case iff the run of $\mathcal{A}_{\mathbb{Q}\Phi}=(T_\Phi,F)$ on $\tau$ 
eventually visits a state from $F$ iff $\tau\in L(\mathcal{A}_{\mathbb{Q}\Phi})$.
The last equivalence holds since all accepting states in $T^+_\Phi$ are sinks so that
a run of $T^+_\Phi$ on an infinite word
eventually visits $F$ iff it eventually stays in $F$ forever.

\item[--] $\mathbb{Q}=\forall$: Let $\tau\in (2^{AP})^\omega$ be an infinite trace. Then
we have $\tau\in[\forall\Phi]$ iff all finite prefixes of
$\tau$ are contained in $[\Phi]$. This in turn is (by equivalence of $\Phi$ and $\mathcal{A}_\Phi$) the case
iff all finite prefixes of
$\tau$ are contained in $L(\mathcal{A}_\Phi)$, which is the case iff the run of $\mathcal{A}_{\mathbb{Q}\Phi}=(T_\Phi,F)$ on $\tau$ 
does not visit a state that is not contained in $F$ iff $\tau\in L(\mathcal{A}_{\mathbb{Q}\Phi})$.
The last equivalence holds since all rejecting states in $T^-_\Phi$ are sinks so that a run of $T^-_\Phi$ on an infinite word
does not visit a state that is not contained in $F$ iff it eventually stays in $F$ forever.
\end{itemize}
To see that $(T^+_\Phi,F)$ is a weak automaton,
observe that any strongly connected component in $T^+_\Phi$
either consists of a single accepting (sink) state or exclusively of rejecting states;
this is the case since in $T^+_\Phi$ it is not possible to reach a rejecting state from an accepting state.
The argument showing the weakness of $(T^-_\Phi, F)$ is dual.
\end{proof}

\section*{Supplement to ``Synthesis via DWA''}

\paragraph{Proof of Theorem~\ref{thm:synth}, restated for convenience.}

\synth*

\begin{proof}
By Lemmas~\ref{lem:weak_closure} and~\ref{lem:ftc_to_automaton}, the DWA $\mathcal{A}_\Psi$
is equivalent to $\Psi$. Thus the system player wins the game induced
by $\mathcal{A}_\Psi$ if and only if there is a (synthesis) strategy
$\sigma$ such that every outcome of $\sigma$ satisfies $\Psi$.
By Lemma~\ref{lem:weakSCCsolution}, winning regions and strategies in the induced 
game can be computed in symbolic time
$\mathcal{O}(|Q_\Psi|)$, where $|Q_\Psi|\in 2^{2^{\mathcal{O}(|\Psi|)}}$.
\end{proof}

\paragraph{Symbolic algorithms for B\"uchi and co-B\"uchi games.}

For completeness, we include pseudo-code for the solution algorithms
for B\"uchi games (Algorithm~\ref{alg:Buechi}) and co-B\"uchi games
(Algorithm~\ref{alg:co-Buechi}). The algorithms
compute nested least and greatest fixpoints.

\begin{algorithm}[bt]
\caption{\label{alg:Buechi}$
\textsc{SolveB\"uchi}(V,F)$}
\DontPrintSemicolon
    $X=F$;\,$X'=V$\;
   \While{$X\neq X'$}{
       $X=X'$\;
       $\mathsf{target}=F\cap \mathsf{Cpre}_s(X)$\;
       $Y=V$;\,$Y'=\emptyset$\;
       \While{$Y\neq Y'$}{
           $Y=Y'$\;
           $Y'=\mathsf{target}\cup \mathsf{Cpre}_s(Y)$\;
       }
       $X'=Y$\;
   }
   \Return{$X$};
\end{algorithm}

\begin{algorithm}[bt]
\caption{\label{alg:co-Buechi}$
\textsc{SolveCo-B\"uchi}(V,F)$}
\DontPrintSemicolon
    $X=V$;\,$X'=\emptyset$\;
   \While{$X\neq X'$}{
       $X=X'$\;
       $\mathsf{target}=\mathsf{Cpre}_s(X)$\;
       $Y=\emptyset$;\,$Y'=V$\;
       \While{$Y\neq Y'$}{
           $Y=Y'$\;
           $Y'=\mathsf{target}\cup (F\cap\mathsf{Cpre}_s(Y))$\;
       }
       $X'=Y$\;
   }
   \Return{$X$};
\end{algorithm}

\paragraph{Full proof of Lemma~\ref{lem:weakSCCsolution}, restated for convenience.}

\scc*

\begin{proof}
For one direction of the proof, we use the data computed by the algorithm to construct a memoryless system player strategy 
that wins all game nodes from the computed set $W$.
For each SCC $S$, the algorithm computes a memoryless strategy
$\sigma_S$ for the system player. Let $v\in W$ and let $S=SCC(v)$ denote the SCC of $v$. If $S$ is rejecting, then $\sigma_S$ ensures that every play starting at $v$
eventually leaves $S$ to a node $w\in W$ that belongs to a lower SCC.
If $S$ is accepting, then $\sigma_S$ ensures that every play starting at $v$
either stays within $S$ forever, or eventually leaves $S$ to a node $w\in W$ that belongs to a lower SCC.
We define the memoryless strategy $\sigma$ for the overall game to always play according to the strategy for the current SCC. Formally,
put $\sigma(v)=\sigma_{SCC(v)}(v)$ for each game node $v\in V_s$.
It follows that $\sigma$ wins every node in $W$.

For the converse direction, let $v$ be a game node that is won by the environment player and let $\sigma$ be an environment strategy
such that every play starting at $v$ and following $\sigma$ eventually stays forever within a rejecting SCC in $A$. We show that $v\notin W$.
The proof proceeds by induction on the number $l$ of different SCCs that are reachable from $v$. If $l=1$, then there is no
way to leave $SCC(v)$, and $SCC(v)$ is rejecting by assumption. Hence the algorithm treats $SCC(v)$ in the first
iteration of the loop (lines 2--10), using
lines 8--9. We have $v\notin \text{Reach}(SCC(v),\emptyset)$ so that $v\notin W$, as required.
If $l>1$, then we distinguish cases. If $SCC(v)$ is accepting, then the environment has, by assumption, a strategy that ensures
that every play starting at $v$ eventually leaves $SCC(v)$ by reaching a node $w$ belonging to a lower SCC.
Every such node $w$ is won by the environment and, by the inductive hypothesis, not contained in $W$.
The algorithm treats the accepting $SCC(v)$ using
lines 6-7. We have $v\notin \text{Safe}(SCC(v),W)$ so that $v\notin W$, as required.
If $SCC(v)$ is rejecting, then the environment has a strategy that ensures that every play starting at $v$ either stays in $SCC(v)$ forever
or eventually leaves the $SCC(v)$ to a node $w$ belonging to a lower SCC. 
Every such node $w$ is won by the environment and, by the inductive hypothesis, not contained in $W$.
Again, the algorithm treats $SCC(v)$ using
lines 8-9 and we have $v\notin \text{Reach}(SCC(v),W)$ so that $v\notin W$, as required.

Regarding time complexity, graphs with $n$ vertices can be decomposed into their SCCs in symbolic time 
$\mathcal{O}(n)$~\cite{DBLP:conf/tacas/LarsenSSJPP23}.
For each strongly connected component $SCC$, Algorithm~\ref{alg:weakSCC}
computes exactly one of the sets
$\text{Reach($SCC,W$)}$ or $\text{Safe($SCC,W$)}$, each of which
can be obtained in symbolic time $\mathcal{O}(|SCC|)$. Since the sizes of all SCCs 
sum to $n$, the overall runtime is linear.
A memoryless strategy is constructed
by always following the memoryless safety or reachability strategy for the current SCC. 

\end{proof}

\end{document}